\newcommand{\eps}{\varepsilon}
\newcommand{\ovr}{\overline}
\newcommand{\Ebb}{\mathbb{E}}
\newcommand{\Pbb}{\mathbb{P}}
\newcommand{\Acal}{\mathcal{A}}
\newcommand{\Mcal}{\mathcal{M}}
\newcommand{\Rcal}{\mathcal{R}}
\newcommand{\Scal}{\mathcal{S}}
\newcommand{\Tcal}{\mathcal{T}}
\newcommand{\Xcal}{\mathcal{X}}
\newcommand{\Ycal}{\mathcal{Y}}
\newcommand{\Zcal}{\mathcal{Z}}
\newtheorem{thm}{Theorem}
\newtheorem{lemma}{Lemma}
\newtheorem{defn}{Definition}
\title{Source-Channel Secrecy with Causal Disclosure}
\author{
\authorblockN{Curt Schieler, Eva C. Song, Paul Cuff, H. Vincent Poor}
\authorblockA{Dept. of Electrical Engineering,\\
Princeton University,
Princeton, NJ 08544.\\
E-mail: \{schieler, csong, cuff, poor\}@princeton.edu }
}
\begin{document}
\maketitle 
\begin{abstract}
Imperfect secrecy in communication systems is investigated. Instead of using equivocation as a measure of secrecy, the distortion that an eavesdropper incurs in producing an estimate of the source sequence is examined. The communication system consists of a source and a broadcast (wiretap) channel, and lossless reproduction of the source sequence at the legitimate receiver is required. A key aspect of this model is that the eavesdropper's actions are allowed to depend on the past behavior of the system. Achievability results are obtained by studying the performance of source and channel coding operations separately, and then linking them together digitally. Although the problem addressed here has been solved when the secrecy resource is shared secret key, it is found that substituting secret key for a wiretap channel brings new insights and challenges: the notion of weak secrecy provides just as much distortion at the eavesdropper as strong secrecy, and revealing public messages freely is detrimental.
\end{abstract}
\section{Introduction}
There are a variety of ways to model the presence of secrecy in a communication system. Shannon considered the availability of secret key shared between the transmitter (Alice) and receiver (Bob), using it to apply a one-time pad to the message \cite{Shannon1949}. Wyner introduced the idea of physical-layer security with the wiretap channel and secrecy capacity, exploiting the difference in the channels to Bob and Eve (the eavesdropper) \cite{Wyner1975}. Maurer derived secrecy by assuming that Alice, Bob, and Eve have access to correlated random variables \cite{Maurer1993}. In such models, the measure of security is usually the conditional entropy, or ``equivocation'', of the message; maximum equivocation corresponds to perfect secrecy. 

In this work, we replace equivocation with an operationally motivated measure of secrecy. We want to design our coding and encryption schemes so that if Eve tries to reproduce the source sequence, she will suffer a certain level of distortion. More precisely, the measure of secrecy is the minimum average distortion attained by the cleverest (worst-case) eavesdropper. Occasionally, we will refer to Eve's minimum average distortion as the payoff: Alice and Bob want to maximize the payoff over all code designs. 

\begin{figure}
\begin{tikzpicture}
[node distance=1cm,minimum height=7mm,minimum width=14mm,arw/.style={->,>=stealth'}]
  \node[coordinate] (source) {};
  \node[rectangle,draw,rounded corners] (alice) [right =9mm of source] {Alice};
  \node[rectangle,draw,rounded corners] (ch) [right =9mm of alice] {$P_{Y,Z|X}$};
  \node[rectangle,draw,rounded corners] (bob) [right =of ch,yshift=8mm] {Bob};
  \node[rectangle,draw,rounded corners] (eve) [right =of ch,yshift=-8mm] {Eve};
  \node[coordinate] (shat) [right =of bob] {};
  \node[coordinate] (t) [right =of eve] {};
  \node[rectangle] at ([xshift=4mm,yshift=8.5mm] bob.center) {$i=1,\ldots,n$};

  \draw [arw] (source) to node[midway,above,yshift=-1mm]{$S^n$} (alice);
  \draw [arw] (alice) to node[midway,above,yshift=-1mm]{$X^n$} (ch);
  \draw [arw] (ch.15) to node[pos=0.3,above,yshift=-1mm]{$Y^n$} (bob.west);
  \draw [arw] (ch.345) to node[pos=0.3,below]{$Z^n$} (eve.west);
  \draw [arw] (bob) to node[midway,above,yshift=-.5mm]{$\widehat{S}_i$} (shat);
  \draw [arw] (eve) to node[midway,above,yshift=-1mm]{$T_i$} (t);
  \draw [arw] (bob) to node [midway,right,xshift=-2.5mm,yshift=0.5mm] {$\widehat{S}^{i-1}$} (eve); 
  \draw [dashed] ([xshift=-1cm,yshift=-5mm] eve.center) rectangle ([xshift=1.8cm,yshift=6mm] bob.center);
\end{tikzpicture}
 \caption{\emph{Joint source-channel operations.} Bob and Eve produce the sequences $\widehat{S}^n$ and $T^n$, respectively, after they receive the broadcast channel outputs. In the $i$th step, Bob produces $\widehat{S}_i$, and Eve produces $T_i$ based on her observation of $Z^n$ and $\widehat{S}^{i-1}$.}
\label{source_channel_fig}
\end{figure}
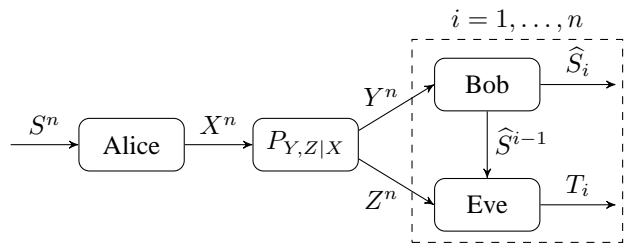

This measure of secrecy has been considered previously by Yamamoto in \cite{Yamamoto1997}, but our setup differs from \cite{Yamamoto1997} in a few ways, the most salient of which is ``causal disclosure''. We assume Bob, the legitimate receiver, is producing actions that are revealed publicly (in particular, to Eve) in a causal manner. We might view Eve as an adversary who is trying to predict Bob's current and future actions based on both the actions that she has already witnessed and the output of her wiretap, and subsequently act upon her predictions (see Figure \ref{source_channel_fig}).

To further motivate the causal disclosure feature of our model, consider the effect of removing it; that is, consider Yamamoto's problem in \cite{Yamamoto1997}. Alice must communicate a source sequence losslessly to Bob over a noiseless channel and the secrecy resource is shared secret key. Eve observes the message but is \emph{not} given causal access to Bob's reproduction sequence. As shown in \cite{Schieler}, the solution to this problem is that any positive rate of secret key is enough to cause Eve unconditionally maximal distortion -- we see that secrecy is alarmingly inexpensive.  Similarly, if the secrecy resource is physical-layer security instead of shared secret key, it can be shown that a positive secrecy capacity is enough to force maximal distortion. However, with causal disclosure in play a tradeoff between secrecy capacity and payoff emerges. As a side remark, one might observe that causal disclosure is consistent with the spirit of Kerckhoffs's principle \cite{Kerchoff}.

In addition to assuming causal disclosure, we further diverge from \cite{Yamamoto1997} by considering the presence of a wiretap channel instead of shared secret key; the problem of shared secret key with causal disclosure was solved in \cite{Cuff2010}. In \cite{Cuff2010}, it was found that the optimal tradeoff between the rate of secret key and payoff earned is achieved by constructing a message that effectively consists of a fully public part and a fully secure part. More specifically, the optimal encoder publicly reveals a distorted version of the source sequence and uses the secret key to apply a one-time pad to the supplement. With this insight, we use the broadcast channel to effectively create two separate channels -- one public and one secure~-- 
\newgeometry{top=0.75in,bottom=1in,right=0.75in,left=0.75in} \noindent 
as is done in \cite{Csiszar--Korner1978}. However, we modify \cite{Csiszar--Korner1978} by removing the requirement that Eve must decode the public message, thereby rendering the message public only in intention, not in reality. We show that freely giving the ``public'' message away dramatically decreases Eve's distortion; thus, her equivocation of the public message becomes important. Upon using channel coding to transform the broadcast channel into effective public and secure channels, we show that the weak secrecy provided by the channel encoder allows us to use the source code from \cite{Cuff2010} to link the source and channel coding operations together digitally. It turns out that a strong secrecy guarantee would not improve the results. Separation allows us to obtain a lower bound on the achievable payoff. We also provide an example of the lower bound and obtain an upper bound. Most of the proofs are omitted.

Before proceeding, we briefly juxtapose our measure of secrecy with equivocation. Equivocation does not give much insight into the structure of Eve's knowledge and does not depend on the actions that Eve makes. In contrast, looking at Eve's distortion tells us something about the quality of her reconstruction if her aim is to replicate a source sequence and produce actions accordingly. There are other instances in the literature where an operational definition of security is used. For example, in \cite{Merhav--Arikan1999}, Merhav et al. looked at the expected number of guesses needed to correctly identify a message.
\section{Problem Statement}
The system under consideration, shown in Figure \ref{source_channel_fig}, operates on blocks of length $n$. 
All alphabets are finite, and both the source and channel are memoryless. The communication flow begins with Alice, who observes a sequence $S^n$, i.i.d. according to $P_S$, and produces an input $X^n$ to the broadcast channel. The memoryless broadcast channel is characterized by $P_{Y,Z|X}$, but since the relevant calculations only involve the marginals $P_{Y|X}$ and $P_{Z|X}$, we need only consider $P_{Y,Z|X}=P_{Y|X}P_{Z|X}$. At one end of the channel, Bob receives $Y^n$ and generates a sequence of actions, $\hat{S}^n$, with the requirement that the block error probability be small. At the other terminal of the channel, Eve receives $Z^n$ and generates $T^n$; in generating $T_i$, she has access to the full $Z^n$ sequence and the past actions of Bob, $\hat{S}^{i-1}$. In essence, we can view Bob and Eve as playing a public game that commences after they receive the channel outputs. In each move, they are allowed to see each other's past moves and produce an estimate of the next source symbol accordingly. Since Bob's reproduction must be almost lossless, his moves are restricted and he does not benefit from knowing Eve's past actions. For similar reasons, revealing $\hat{S}^{i-1}$ to Eve at step $i$ has exactly the same consequences as revealing $S^{i-1}$; henceforth, we consider the causal disclosure to be $S^{i-1}$. A more general version of the game would allow for distortion in Bob's estimate (see \cite{Cuff2010Allerton}), but in this work we focus on lossless communication.

In the next two definitions, refer to Figure \ref{source_channel_fig} for an illustration of the setup.
\begin{defn}
 For blocklength $n$, a source-channel code consists of an encoder $f$ and a decoder $g$:
 \begin{IEEEeqnarray*}{l}
  f:\Scal^n\rightarrow \Xcal^n \text{ (more generally, }P_{X^n|S^n}\text{)}\\
  g: \Ycal^n\rightarrow \Scal^n.
 \end{IEEEeqnarray*}
\end{defn}
Note that we do not restrict the encoder to be deterministic. 

For any source-channel code, we can calculate the probability of block error and the payoff earned against the worst-case adversary, as defined in the following: 
\begin{defn}
 Fix a value function (or, distortion measure) $\pi:\Scal\times\Tcal\rightarrow [0,\infty)$. We say that a payoff $\Pi$ is achievable if there exists a sequence of source-channel codes such that
\begin{equation}
 \label{errdefn}\lim_{n\rightarrow\infty}\Pbb[S^n\neq\hat{S}^n]=0
\end{equation}
and
\begin{equation}
 \label{paydefn}\lim_{n\rightarrow\infty}\min_{\{t_i(s^{i-1},z^n)\}_i}\Ebb\left[\frac1n \sum_{i=1}^n\pi(S_i,t_i(S^{i-1},Z^n))\right]\geq\Pi.
\end{equation}
\end{defn}
Eve's average distortion, i.e. the LHS of (\ref{paydefn}), is defined exactly as in rate-distortion theory for separable distortion measures. Although it is not explicit in (\ref{paydefn}), we assume that Eve has full knowledge of the source-channel code and the source distribution $P_S$.
\section{Lower Bound}
The first result is a lower bound on the maximum achievable payoff.

\begin{thm}
\label{innerbnd}
Fix $P_S$, $P_{Y,Z|X}$, and $\pi(s,t)$. A payoff $\Pi$ is achievable if the inequalities
\begin{IEEEeqnarray*}{rCl}
 I(S;U) &<& I(V;Y) \\
 H(S|U) &<& I(W;Y|V)-I(W;Z|V)\\
 \Pi &\leq& \min_{t(u)}\Ebb[\pi(S,t(U))]
\end{IEEEeqnarray*}
hold for some distribution $P_SP_{U|S}P_VP_{W|V}P_{X|W}P_{Y,Z|X}$.
\end{thm}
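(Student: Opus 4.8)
The plan is to follow the separation architecture advertised in the introduction: use channel coding to manufacture two ``effective'' links from the broadcast channel --- a public link carrying a message $M_{\mathrm{pub}}$ at rate $\approx I(V;Y)$ that Eve can decode, and a confidential link carrying $M_{\mathrm{sec}}$ at rate $\approx I(W;Y|V)-I(W;Z|V)$ that Eve cannot --- and then feed these two digital links into the source code of \cite{Cuff2010}. On the source side, I would generate a codebook of $U^n$ sequences jointly typical with $S^n$: the index of the chosen $U^n$ is $M_{\mathrm{pub}}$ (requiring rate $I(S;U)$, hence the first inequality $I(S;U)<I(V;Y)$), and the remaining description needed to pin down $S^n$ losslessly given $U^n$ is $M_{\mathrm{sec}}$ (requiring rate $H(S|U)$, hence the second inequality). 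Bob decodes both messages from $Y^n$, recovers $U^n$ and then $S^n$ exactly, giving \eqref{errdefn}. For the channel layer I would invoke the Csisz\'ar--K\"orner broadcast construction with a superposition code $V\to W\to X$, but --- as the introduction stresses --- \emph{without} forcing Eve to decode $V^n$; the wiretap random-binning argument on the $W$-layer gives weak secrecy of $M_{\mathrm{sec}}$, i.e. $\tfrac1n I(M_{\mathrm{sec}};Z^n)\to 0$.

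The heart of the argument, and the step I expect to be the main obstacle, is the converse-flavored bound on Eve's distortion: showing that for \emph{every} strategy $\{t_i(S^{i-1},Z^n)\}$, the time-averaged payoff is at least $\min_{t(u)}\Ebb[\pi(S,t(U))]$ asymptotically. The intuition is that the only thing Eve learns about $S^n$ through $Z^n$ --- beyond the causal disclosure $S^{i-1}$ --- is (essentially) $M_{\mathrm{pub}}$, hence $U^n$; the secure message $M_{\mathrm{sec}}$ is hidden by weak secrecy, and without it $Z^n$ carries negligibly more information about the ``future'' symbols $S_i,S_{i+1},\dots$ than $U^n$ alone. I would make this precise with a soft-covering / likelihood-encoder style argument: couple the true joint distribution of $(S^n,Z^n)$ to an idealized one in which, conditioned on $U^n$, the pair $(S^n,Z^n)$ is drawn i.i.d.\ from $P_{S|U}P_{Z|U}$ (so $S_i$ and $Z^n$ are conditionally independent given $U_i$), and control the total-variation gap between the two using the fact that $M_{\mathrm{sec}}$ is nearly uniform and nearly independent of $Z^n$. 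Under the idealized distribution, the best Eve can do at time $i$ having seen $S^{i-1}$ and $Z^n$ is to form the Bayes-optimal estimate from $U_i$, which yields exactly $\Ebb[\pi(S_i,t^*(U_i))]$; averaging over $i$ and using that the empirical distribution of $(S_i,U_i)$ concentrates on $P_SP_{U|S}$ gives the claimed bound, with the TV gap contributing a vanishing additive error. The disclosure of $S^{i-1}$ is handled by noting it does not help predict $S_i$ once $U_i$ is known in the i.i.d.\ idealization (memorylessness), exactly as in the remark in the problem statement that reduces $\hat S^{i-1}$ to $S^{i-1}$.

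Two technical wrinkles deserve care. First, one must check that the distortion lower bound degrades gracefully: a $\delta$-perturbation in total variation costs at most $\delta\cdot \pi_{\max}$ in expected payoff \emph{per symbol}, but here $\pi$ is only assumed to map into $[0,\infty)$, so I would either invoke boundedness on the finite alphabets $\Scal\times\Tcal$ (which is automatic) or restrict to the typical set where Eve's per-symbol payoff is bounded. Second, the rate inequalities are strict ($<$), which gives the slack needed to absorb the $\eps$-terms from typicality, channel-coding error exponents, and the soft-covering lemma; I would carry a single $\eps>0$ through and send it to zero at the end. With these pieces --- source code of \cite{Cuff2010}, broadcast channel code \`a la \cite{Csiszar--Korner1978} with the public-decoding constraint dropped, and the coupling argument for the distortion bound --- the three displayed inequalities are exactly the feasibility conditions, completing the achievability proof.
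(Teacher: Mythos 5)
Your proposal matches the paper's proof in all essentials: the paper also proves Theorem~\ref{innerbnd} by separation, combining a Csisz\'ar--K\"orner-style channel code (Theorem~\ref{bcc}, with the near-uniformity condition (\ref{condunif}) playing the role of your ``nearly uniform given $M_p$'' requirement) with the source code of \cite{Cuff2010} (Theorem~\ref{sourcethm}), and it formalizes your total-variation coupling step as Lemma~\ref{connect}, using weak secrecy plus Pinsker's inequality to replace the true distribution by the Markov chain $S_Q - (M_p,S^{Q-1},Q) - Z^n$ at a $\delta(\eps)$ cost. The only nuance worth noting is that the secrecy guarantee you need is the conditional one, $\tfrac1n I(M_s;Z^n,M_p)\to 0$, rather than $\tfrac1n I(M_s;Z^n)\to 0$, which is what the paper's definition of an achievable rate pair requires and what your subsequent reasoning implicitly uses.
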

Theorem~\ref{innerbnd} is obtained in part by transforming (via channel coding) the noisy broadcast channel into noiseless public and secure channels. However, the result can be strengthened considerably by taking into account Eve's equivocation of the public message. The source-channel code used to achieve Theorem~\ref{innerbnd} remains the same; only the analysis is strengthened.  We illustrate and discuss this further in section~\ref{example}, where we give a brief proof sketch. Our main result is the following theorem.
\begin{thm}
 \label{improved}
 Fix $P_S$, $P_{Y,Z|X}$, and $\pi(s,t)$. A payoff $\Pi$ is achievable if the inequalities
\begin{IEEEeqnarray*}{rCl}
 I(S;U) &<& I(V;Y) \\
 H(S|U) &<& I(W;Y|V)-I(W;Z|V)\\
 \Pi &\leq& \alpha \cdot \Pi_{\max} + (1-\alpha) \cdot \min_{t(u)}\Ebb[\pi(S,t(U))] 
\end{IEEEeqnarray*}
hold for some distribution $P_SP_{U|S}P_VP_{W|V}P_{X|W}P_{Y,Z|X}$, where
$$\Pi_{\max} = \min_t\Ebb[\pi(S,t)]$$ and 
$$\alpha=\frac{[I(V;Y)-I(V;Z)]^+}{I(S;U)}.$$
\end{thm}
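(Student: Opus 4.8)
The plan is to build on the source-channel code that already achieves Theorem~\ref{innerbnd} and refine only the analysis of Eve's distortion. Recall that in that construction Alice's encoder first passes $S^n$ through the test channel $P_{U|S}$ to get an auxiliary (``public-intended'') codeword $U^n$, and then reliably transmits the index of $U^n$ over the channel at rate $I(S;U)$ using the $V$-codewords (feasible since $I(S;U)<I(V;Y)$), while the residual description of $S^n$ at rate $H(S|U)$ is transmitted secretly via a wiretap code on the $W$-codewords superimposed on $V$ (feasible and weakly secure since $H(S|U)<I(W;Y|V)-I(W;Z|V)$). The key new observation is that the ``public'' $U$-layer is itself carried by a channel code of rate $I(S;U)$ over a channel to Eve that only supports rate $I(V;Z)$; hence if $I(V;Z)<I(V;Y)$ (equivalently $[I(V;Y)-I(V;Z)]^+>0$), Eve cannot resolve the entire $U^n$ codeword — she has residual uncertainty about roughly an $\alpha = [I(V;Y)-I(V;Z)]^+/I(S;U)$ fraction of its ``coordinates of information.''

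The main steps I would carry out are as follows. First, I would set up a standard wiretap-style binning of the $V$-codebook: partition the $2^{nI(S;U)}$ messages (indices of $U^n$) into $2^{n[I(V;Y)-I(V;Z)]^+}$ bins, reliably decodable by Bob but with Eve's list of plausible bin-mates essentially uniform — this is the weak-secrecy guarantee from the wiretap channel applied to the $V$-layer, and it gives $\frac1n H(M \mid Z^n) \gtrsim [I(V;Y)-I(V;Z)]^+$ where $M$ is the public message index. Second, I would translate this channel-level equivocation into an equivocation about the source: because $U^n$ is a function of $M$, and because the test channel is memoryless, a typical-sequence / entropy-counting argument shows that conditioned on $Z^n$, Eve faces a set of roughly $2^{nH(S|U)}\cdot 2^{n\alpha I(S;U)}$ jointly typical candidate source sequences, on which she is approximately uniform in the relevant sense. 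Third — the payoff bound — I would invoke a soft-covering / likelihood-encoder type argument (as in the analysis behind \cite{Cuff2010}) to argue that against any attack $t_i(S^{i-1},Z^n)$, in the fraction $\alpha$ of blocklength where Eve's $U$-information is effectively hidden she is reduced to the blind prior and earns at most $\Pi_{\max}=\min_t\Ebb[\pi(S,t)]$ per symbol, while on the remaining $(1-\alpha)$ fraction she does no better than knowing $U$ exactly, earning $\min_{t(u)}\Ebb[\pi(S,t(U))]$ per symbol; averaging and taking $n\to\infty$ yields the claimed convex combination. The causal-disclosure structure is handled exactly as in Theorem~\ref{innerbnd}: revealing $S^{i-1}$ cannot help Eve decode the secret $W$-layer, and the $U$-layer equivocation argument is arranged to be robust to this side information.

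The hard part will be Step~3: making rigorous the claim that Eve's residual uncertainty about the public codeword, which is a ``global'' statement about the index $M$, degrades her per-symbol distortion by exactly the factor $\alpha$ uniformly in $i$ and robustly to the causal disclosure $S^{i-1}$. The subtlety is that Eve's attack can adaptively exploit $S^{i-1}$ to ``learn'' which bin she is in; one must show that this leakage is asymptotically negligible — i.e., that $n\eps$ symbols of causal disclosure reveal only $O(n\eps)$ bits about $M$ and hence cannot collapse the $2^{n\alpha I(S;U)}$-fold ambiguity. I would control this by a blocking/partitioning argument that splits the block into sub-blocks, re-randomizes the wiretap binning across sub-blocks, and applies the distortion bound sub-block-wise so the causal disclosure from earlier sub-blocks is ``used up'' before it can assist. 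The remaining ingredients — reliability at Bob, existence of the wiretap code for the $W$-layer, and the basic soft-covering lemma — are standard and I would cite them from \cite{Csiszar--Korner1978}, \cite{Cuff2010}, and \cite{Schieler} rather than reprove them; indeed the excerpt notes that most proofs are omitted, so a proof sketch emphasizing the $\alpha$-fraction equivocation mechanism and deferring routine steps is appropriate here.
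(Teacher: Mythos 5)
Your identification of where the factor $\alpha$ comes from --- the channel leaves Eve with about $n[I(V;Y)-I(V;Z)]^+$ bits of equivocation about the public message, which carries information about the source at rate $I(S;U)$ bits per symbol --- is exactly the mechanism behind Theorem~\ref{improved}, and your first two steps (wiretap-style treatment of the $V$-layer and translating channel equivocation into equivocation about $U^n$) agree with the paper's sketch. The gap is in your Step~3, and it is conceptual rather than technical. You set out to show that the causal disclosure ``cannot collapse the $2^{n\alpha I(S;U)}$-fold ambiguity,'' and you propose to enforce this by re-randomizing the binning across sub-blocks so that earlier disclosure is ``used up'' before it can assist. But the disclosure \emph{does} collapse the ambiguity, and the theorem is precisely the accounting of when: each revealed source symbol resolves roughly $I(S;U)$ bits of Eve's uncertainty about $M_p$ (since $M_p$ essentially indexes a $U^n$ codeword jointly typical with $S^n$), so the $n[I(V;Y)-I(V;Z)]^+$ bits of channel equivocation are exhausted at a time $k\approx\alpha n$. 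The split in the payoff is therefore \emph{temporal}: for $i\leq k$ Eve is still effectively blind about $U_i$ and earns only $\Pi_{\max}$, while for $i>k$ she has pinned down $M_p$ and the analysis reverts to that of Theorem~\ref{innerbnd}, giving $\min_{t(u)}\Ebb[\pi(S,t(U))]$. If your re-randomization actually prevented the collapse, you would be proving the much stronger (and, for this code, false) claim that the payoff is $\Pi_{\max}$ over the whole block.

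Two further points. First, the paper is explicit that the code achieving Theorem~\ref{improved} is identical to the one for Theorem~\ref{innerbnd}; only the analysis is strengthened. Your sub-block re-randomization modifies the code, and in any case cannot accomplish what you want: however the bins are drawn, $M_p$ determines $U^n$, and $S^{i-1}$ localizes $U^{i-1}$ at rate $I(S;U)$ bits per symbol, so the leakage is unavoidable. Second, what actually needs proving on the initial segment is a uniformity statement: for $i\leq k$, the set of $M_p$ candidates consistent with both $Z^n$ and $S^{i-1}$ is still exponentially large \emph{and} sufficiently spread that it gives Eve no useful handle on $U_i$, so that her conditional belief about $S_i$ is close to the prior $P_S$. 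That is the analogue of Lemma~\ref{connect} --- a total-variation argument driven by a small normalized conditional mutual information --- applied now to the residual equivocation of $M_p$ rather than of $M_s$, and that is where the real work lies, not in insulating $M_p$ from the disclosure.
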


We obtain the lower bound in Theorem~\ref{improved} by concatenating a source code and a channel code and matching the rates of the two codes. We first describe what constitutes a good channel code, and the secrecy guarantees that come with it.

\subsection{Channel Code}
The channel code is made up of an encoder and decoder as shown in Figure \ref{channel_fig}. 
\begin{figure}
\begin{tikzpicture}
[node distance=1cm,minimum height=7mm,minimum width=14mm,arw/.style={->,>=stealth'}]
  \node[rectangle,draw,rounded corners] (fc) {$f_c$};
  \node[coordinate] (mp) [left =of fc,yshift=2mm] {};
  \node[coordinate] (ms) [left =of fc,yshift=-2mm] {};
  \node[rectangle,draw,rounded corners] (ch) [right =of fc] {$P_{Y,Z|X}$};
  \node[rectangle,draw,rounded corners] (bob) [right =of ch,yshift=6mm] {$g_c$};
  \node[coordinate] (eve) [right =of ch,yshift=-6mm] {};
  \node[coordinate] (mphat) [right =of bob,yshift=2mm] {};
  \node[coordinate] (mshat) [right =of bob,yshift=-2mm] {};

  \draw [arw] (mp) to node[midway,above,yshift=-1mm]{$M_p$} (mp -| fc.west);
  \draw [arw] (ms) to node[midway,below,yshift=1mm]{$M_s$} (ms -| fc.west);
  \draw [arw] (fc) to node[midway,above,yshift=-1mm]{$X^n$} (ch);
  \draw [arw] (ch.15) to node[midway,above,yshift=-2mm]{$Y^n$} (bob.west);
  \draw [arw] (ch.345) to node[midway,above,yshift=-.8mm]{$Z^n$} (eve.west);
  \draw [arw] (bob.east |- mphat) to node[midway,above,yshift=-1mm]{$\widehat{M}_p$} (mphat);
  \draw [arw] (bob.east |- mshat) to node[midway,below,yshift=1mm]{$\widehat{M}_s$} (mshat);
\end{tikzpicture}
\caption{Channel coding operations.}
\label{channel_fig}
\end{figure}
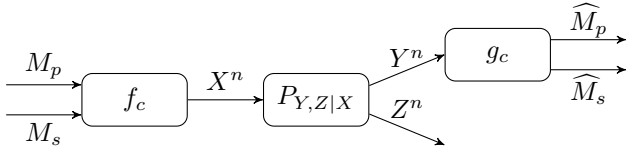
The input to the encoder is a pair of messages $(M_p,M_s)$ destined for the channel decoder, with $M_p$ representing a public message and $M_s$ a secure message. The channel decoder outputs the pair $(\widehat{M}_p,\widehat{M}_s)$. We allow the channel encoder to use private randomization.
\begin{defn}
 A $(R_p,R_s,n)$ channel code consists of a channel encoder $f_c$ and channel decoder $g_c$:
 \begin{IEEEeqnarray*}{l}
  f_c:\Mcal_p\times\Mcal_s\rightarrow \Xcal^n \text{ (more generally, }P_{X^n|M_p,M_s}\text{)}\\
  g_c:\Ycal^n \rightarrow \Mcal_p\times\Mcal_s,
 \end{IEEEeqnarray*}
 where $|\Mcal_p|=2^{nR_p}$ and $|\Mcal_s|=2^{nR_s}$.
\end{defn}
Keeping in mind criteria (\ref{errdefn}) and (\ref{paydefn}) and our desire to form public and private channels, we might ask: what constitutes a good channel code? First, the legitimate channel decoder must recover $M_p$ and $M_s$ with vanishing probability of error. Second, we need a guarantee that we have indeed created a private channel. Ideally, we want to guarantee that the \emph{a priori} distribution on $M_s$ matches the \emph{a posteriori} distribution given both $Z^n$ and $M_p$. If this holds, we are assured that even if the adversary is able to view the public channel perfectly (i.e., recover $M_p$), his optimal strategy for determining $M_s$ is to choose a random message according to the a priori distribution $P_{M_s}$. Later we will exploit the adversary's inability to exactly recover $M_p$, but for now we suppose that it is freely available. We turn to the notion of secrecy capacity and cast our requirement in terms of entropy: we want $H(M_s) \approx H(M_s|Z^n,M_p)$, or $I(M_s;Z^n,M_p)\approx0$. More precisely, the \emph{normalized} mutual information $\frac{1}{n}I(M_s;Z^n,M_p)$ should vanish in a good channel code. Although this measure of secrecy is so-called ``weak secrecy'', it turns out that having strong secrecy would not improve the payoff for the source encoder that we use.

We make a further technical requirement (c.f. \cite{Csiszar--Korner1978}) that good channel codes must satisfy for our purposes, considering the particular source encoder that we employ. The channel code must work not only for $(M_p,M_s)$ independent and uniformly distributed, but more generally in the case that, conditioned on $M_p$, $M_s$ is almost uniform. To be precise, we require
\begin{equation}
 \label{condunif}\max_{m_p,m_s,m_s'}\frac{\Pbb[M_s=m_s|M_p=m_p]}{\Pbb[M_s=m_s'|M_p=m_p]}\leq 2^{n\cdot\delta_n}
\end{equation}
to hold for some $\delta_n$ such that $\delta_n\rightarrow 0$ as $n\rightarrow\infty$. The source encoder we employ will produce message pairs $(M_p,M_s)$ that satisfy this condition, regardless of the source distribution.

\begin{defn}
 The pair of rates $(R_p,R_s)$ is achievable if, for all $(M_p,M_s)$ satisfying (\ref{condunif}) for every $n$, there exists a sequence of $(R_p,R_s,n)$ channel codes such that
\begin{equation*}
 \lim_{n\rightarrow\infty}\Pbb[(M_p,M_s)\neq(\widehat{M}_p,\widehat{M}_s)]=0
\end{equation*}
and
\begin{equation*}
  \lim_{n\rightarrow\infty}\frac1n I(M_s;Z^n,M_p)=0.
\end{equation*}
\end{defn}
The following theorem gives an achievable region, the proof of which comes from modifying the work done in \cite{Csiszar--Korner1978}. The idea is to include enough private randomness in the channel encoder so that the adversary effectively uses his full decoding capabilities to resolve the randomness, leaving no room to additionally decode part of the secret message. The amount of randomness required is the mutual information provided by the adversary's channel.
\begin{thm}
\label{bcc}
The pair $(R_p,R_s)$ is achievable if
\begin{IEEEeqnarray*}{rCl}
 R_p &<& I(V;Y),\\
 R_s &<& I(W;Y|V)-I(W;Z|V)
 \end{IEEEeqnarray*}
for some $P_VP_{W|V}P_{X|W}P_{YZ|X}$.
\end{thm}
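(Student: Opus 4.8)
The plan is to prove Theorem~\ref{bcc} by a superposition random-coding argument equipped with an auxiliary ``confusion'' index, along the lines of Wyner's wiretap code and the broadcast construction of \cite{Csiszar--Korner1978}, and then to check that all the estimates survive the relaxation from a uniform secret message to one that is only conditionally near-uniform in the sense of~(\ref{condunif}). Fix an input distribution $P_VP_{W|V}P_{X|W}$ and a small $\eps>0$, and choose a confusion rate $R_0$ slightly above $I(W;Z|V)$, say $R_0=I(W;Z|V)+\eps$. Under the hypotheses of the theorem we then have $R_p<I(V;Y)$ and $R_s+R_0<I(W;Y|V)$ --- precisely the inequalities the legitimate decoder will need --- while the choice $R_0\ge I(W;Z|V)$ is precisely what the eavesdropper bound will need. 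As the paragraph preceding the theorem indicates, the role of the confusion index is to inject just enough private randomness on the $W$-layer that the eavesdropper's entire residual decoding ability is spent resolving it, leaving nothing with which to resolve $M_s$.

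First I would fix the code. Draw $2^{nR_p}$ cloud centers $V^n(m_p)$ i.i.d.\ $\sim\prod_i P_V$; for each $m_p$ draw satellites $W^n(m_p,m_s,k)$, indexed by $m_s\in\Mcal_s$ and $k\in\{1,\ldots,2^{nR_0}\}$, conditionally i.i.d.\ $\sim\prod_i P_{W|V}(\cdot\mid V_i(m_p))$. On input $(m_p,m_s)$ the encoder $f_c$ draws $k$ uniformly and transmits $X^n\sim\prod_i P_{X|W}(\cdot\mid W_i(m_p,m_s,k))$; since the only use $f_c$ makes of $(m_p,m_s)$ is to select a codeword, the same codebook serves every input distribution satisfying~(\ref{condunif}). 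The decoder $g_c$ performs joint-typicality decoding of the triple $(m_p,m_s,k)$ from $Y^n$; a routine packing argument gives vanishing block-error probability because $R_p<I(V;Y)$ and $R_s+R_0<I(W;Y|V)$ (so in particular $R_p+R_s+R_0<I(V,W;Y)$). This settles the first requirement in the definition of achievability.

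It remains to bound the leakage, and here I would split $I(M_s;Z^n,M_p)=I(M_s;M_p)+I(M_s;Z^n\mid M_p)$. For the first term, condition~(\ref{condunif}) forces $H(M_s\mid M_p=m_p)\ge\log|\Mcal_s|-n\delta_n$ for every $m_p$, hence $I(M_s;M_p)=H(M_s)-H(M_s\mid M_p)\le n\delta_n=o(n)$. For the second term, fix $M_p=m_p$: now $V^n(m_p)$ is a known sequence and the residual code in $(m_s,k)$ is exactly a Wyner-type wiretap code over $P_{Z|X}\circ P_{X|W}$ against an eavesdropper who already knows the cloud center. Because $R_0\ge I(W;Z|V)+\eps$, the confusion codewords $\{W^n(m_p,m_s,k)\}_k$ soft-cover the $Z^n$-output distribution induced by each individual $m_s$, so with high probability over the codebook the conditional law of $Z^n$ given $(M_p,M_s)=(m_p,m_s)$ is within vanishing total variation of the law of $Z^n$ given $V^n(m_p)$ --- a law that does not depend on $m_s$. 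The standard equivocation estimate (as in \cite{Csiszar--Korner1978}, or equivalently a soft-covering plus continuity-of-entropy argument) then gives $\tfrac1n I(M_s;Z^n\mid M_p)\to 0$. Note this step nowhere uses $R_p<I(V;Z)$: Eve is never asked to decode $M_p$, and conditioned on $m_p$ the cloud center is simply handed to her, so everything happens at the $W$-layer. Combining the two terms gives $\tfrac1n I(M_s;Z^n,M_p)\to 0$, and a standard expurgation yields a single deterministic codebook meeting both the error and the leakage conditions.

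The main obstacle is the robustness of this secrecy step to the weakened input hypothesis: the classical equivocation lemmas are stated for a uniformly distributed secret message, whereas here $M_s$, conditioned on $M_p$, is only within a factor $2^{n\delta_n}$ of uniform. I would handle this by carrying the $2^{n\delta_n}$ slack through the soft-covering/equivocation bound, where it merely rescales the relevant exponentially small deviation terms by a subexponential factor --- harmless, since $R_0$ was taken strictly above $I(W;Z|V)$. In fact the random-coding bounds on both error probability and leakage depend on the input distribution only through $\delta_n$, so they hold uniformly over all distributions satisfying~(\ref{condunif}); this both makes the derandomization clean and shows the constructed code family is insensitive to the source distribution, as claimed after~(\ref{condunif}). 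Everything else --- the packing bound, the splitting of the leakage, and the bound on $I(M_s;M_p)$ --- is routine.
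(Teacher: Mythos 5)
Your proposal is correct and follows essentially the same route the paper indicates for Theorem~\ref{bcc}: a Csisz\'ar--K\"orner superposition code augmented with a confusion index of rate just above $I(W;Z|V)$ to exhaust the eavesdropper's decoding ability, with the decoder constraints $R_p<I(V;Y)$ and $R_s+R_0<I(W;Y|V)$, and with the near-uniformity condition~(\ref{condunif}) absorbed as a subexponential slack in both the error and leakage bounds. The paper omits the detailed proof, but your treatment of the two leakage terms $I(M_s;M_p)\le n\delta_n$ and $\tfrac1n I(M_s;Z^n|M_p)\to 0$, and your observation that $R_p<I(V;Z)$ is never needed because Eve is handed the cloud center, fill in exactly the modification of \cite{Csiszar--Korner1978} that the paper describes.
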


\subsection{Source Code}
A source code consists of a source encoder and decoder. The encoder observes a memoryless source $P_S$ and produces a pair of messages, $(M_p,M_s)$, with $M_p$ representing a public message and $M_s$ a secure message. We allow the source encoder to use private randomization.
\begin{defn}
A $(R_p,R_s,n)$ source code consists of an encoder $f_s$ and a decoder $g_s$: 
\begin{IEEEeqnarray*}{l}
 f_s:\Scal^n\rightarrow \Mcal_p\times\Mcal_s \text{ (more generally, }P_{M_p,M_s|S^n}\text{)}\\
 g_s:\Mcal_p\times\Mcal_s \rightarrow \Scal^n,
\end{IEEEeqnarray*}
where $|\Mcal_p|=2^{nR_p}$ and $|\Mcal_s|=2^{nR_s}$.
\end{defn}
As shown in Figure \ref{source_fig}, the output of the source encoder is effectively passed through a channel $P_{Z^n|M_p,M_s}$. 
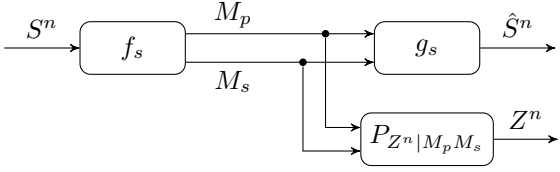
\begin{figure}
\begin{tikzpicture}
[node distance=1cm,minimum height=7mm,minimum width=14mm,arw/.style={->,>=stealth'}]
  \node[coordinate] (source) {};
  \node[rectangle,draw,rounded corners] (fs) [right =of source] {$f_s$};
  \node[circle,fill=black,minimum size = 1mm,inner sep=0pt] (dummy1) [right =1.5cm of fs.345] {};
  \node[circle,fill=black,minimum size = 1mm,inner sep=0pt] (dummy2) [right =1.8cm of fs.15] {};
  \node[rectangle,draw,rounded corners] (gs) [right =2.5cm of fs] {$g_s$};
  \node[coordinate] (shat) [right =of gs] {};
  \node[rectangle,draw,rounded corners] (ch) [below =0.5cm of gs] {$P_{Z^n|M_pM_s}$};
  \node[coordinate] (zn) [right =8.5mm of ch] {};

  \draw [arw] (source) to node[midway,above,yshift=-1mm]{$S^n$} (fs);
  \draw [arw] (fs.15) to node[near start,above,yshift=-1mm]{$M_p$} (fs.15 -| gs.165);
  \draw [arw] (fs.345) to node[near start,below,yshift=1mm]{$M_s$} (fs.345 -| gs.195);
  \draw [arw] (dummy1) |- (ch.190); 
  \draw [arw] (dummy2) |- (ch.170); 
  \draw [arw] (gs) to node[midway,above,yshift=-.5mm]{$\hat{S}^n$} (shat);
  \draw [arw] (ch) to node[midway,above,yshift=-1mm]{$Z^n$} (zn);
\end{tikzpicture}
\caption{Source coding operations.} 
\label{source_fig}
\end{figure}
In light of the previous subsection, we want to consider sequences of channels $\{P_{Z^n|M_p,M_s}\}_{n=1}^\infty$ that provide weak secrecy when the output of the source encoder satifies (\ref{condunif}).
\begin{defn}
 Define $C_S$ to be the set of $\{P_{Z^n|M_p,M_s}\}_{n=1}^\infty$ such that, for all $P_{M_p,M_s}$ satisfying (\ref{condunif}) for every $n$, 
 \begin{equation*}
  \frac1n I(M_s;Z^n|M_p)\rightarrow0. 
 \end{equation*}
\end{defn}
We can view $C_S$ as the resource of physical-layer security. Notice that a sequence of good channel codes yields a sequence of channels in $C_S$ to the adversary. We now consider what payoff can be achieved if rates are imposed on the messages, and $C_S$ is imposed. By considering the availability of a structure $C_S$ and a noiseless channel from Alice to Bob, we are effectively divorcing the goals of source and channel coding so that each can be considered separately.
\begin{defn}
 Fix $P_S$ and $\pi(s,t)$. The triple $(R_p,R_s,\Pi)$ is achievable if there exists a sequence of $(R_p,R_s,n)$ source codes such that
\begin{equation}
 \lim_{n\rightarrow\infty}\Pbb[S^n\neq\hat{S}^n]=0,
\end{equation}
\vspace{0pt}
\begin{equation}
 (M_p,M_s) \text{ satisfies (\ref{condunif}) }\forall n,\vspace{5pt}
\end{equation}
and, for all $\{P_{Z^n|M_p,M_s}\}_{n=1}^\infty\in C_S$,
\vspace{5pt}
\begin{equation}
 \lim_{n\rightarrow\infty}\min_{\{t_i(s^{i-1},z^n)\}_i}\Ebb\left[\frac1n \sum_{i=1}^n\pi(S_i,t_i(S^{i-1},Z^n))\right]\geq\Pi.
\end{equation}
\vspace{3pt}
\end{defn}
We give a region of achievable $(R_p,R_s,\Pi)$. In \cite{Cuff2010}, the region is characterized when the secrecy resource is shared secret key instead of physical-layer security.
\begin{thm}
 \label{sourcethm}
 Fix $P_S$ and $\pi(s,t)$. Then $(R_p,R_s,\Pi)$ is achievable if the inequalities
\begin{IEEEeqnarray*}{rCl}
 R_p &>& I(S;U)\\
 R_s &>& H(S|U)\\
 \Pi &\leq& \min_{t(u)}\Ebb[\pi(S,t(U))]
\end{IEEEeqnarray*}
hold for some $P_SP_{U|S}$.
\end{thm}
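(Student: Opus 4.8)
The plan is to use the source code of \cite{Cuff2010}, replacing the one-time-padded secret message there by a message that is protected only by the weak secrecy afforded by the structure $C_S$. Fix $P_{U|S}$ meeting the three inequalities and let $P_U$ be the induced marginal. Generate a codebook $\{u^n(m_p)\}_{m_p\in\Mcal_p}$ with i.i.d.\ $\sim P_U$ entries and $R_p>I(S;U)$. Given $s^n$, the encoder picks $M_p$ by a likelihood (soft-covering) rule; since $R_p>I(S;U)$, this makes the induced joint law of $(S^n,U^n(M_p))$ close in total variation to $\prod_{i=1}^n P_{S,U}$, and in particular close to a product law with marginal $P_U$ on each $U_i$. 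The secret message $M_s$ then describes $S^n$ given $U^n(M_p)$; since conditioned on a typical codeword $S^n$ is nearly uniform over roughly $2^{nH(S|U)}$ sequences, $R_s>H(S|U)$ bits suffice, and by including private randomization exactly as in \cite{Cuff2010} the pair $(M_p,M_s)$ can be made to satisfy (\ref{condunif}) for every $n$. The decoder recovers $u^n(M_p)$ and then $s^n$ from $(M_p,M_s)$, losslessly with vanishing error probability.

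It remains to lower-bound Eve's distortion against an arbitrary $\{P_{Z^n|M_p,M_s}\}\in C_S$. Handing Eve the public message $M_p$ for free only enlarges her set of strategies, so the minimum distortion in (\ref{paydefn}) is no smaller than its counterpart in which Eve uses $t_i(M_p,S^{i-1},Z^n)$; it therefore suffices to show the latter is at least $\min_{t(u)}\Ebb[\pi(S,t(U))]-o(1)$. Minimizing over $\{t_i\}$, this quantity equals $\frac1n\sum_{i=1}^n\Ebb[\phi(P_{S_i\mid M_p,S^{i-1},Z^n})]$ where $\phi(Q):=\min_t\Ebb_{S\sim Q}[\pi(S,t)]$ is concave and Lipschitz in total variation (all alphabets are finite, so $\pi$ is bounded). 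The key estimate is $\frac1n\sum_{i=1}^n I(S_i;S^{i-1},Z^n\mid M_p)\to0$: by the chain rule this equals $\frac1n\big(\sum_i H(S_i\mid M_p)-H(S^n\mid M_p)\big)+\frac1n I(S^n;Z^n\mid M_p)$; the first term is the non-product defect of $S^n$ given $M_p$ and is $o(1)$ by the soft-covering construction, while the second is $o(1)$ because the Markov chain $Z^n-(M_p,M_s)-S^n$ gives $I(S^n;Z^n\mid M_p)\le I(M_s;Z^n\mid M_p)$, which vanishes when normalized since $(M_p,M_s)$ satisfies (\ref{condunif}) and $\{P_{Z^n\mid M_p,M_s}\}\in C_S$.

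Pinsker's inequality and Jensen then give $\frac1n\sum_i\Ebb\|P_{S_i\mid M_p,S^{i-1},Z^n}-P_{S_i\mid M_p}\|_{\mathrm{TV}}\to0$, so by Lipschitzness of $\phi$ Eve's distortion is at least $\frac1n\sum_i\Ebb[\phi(P_{S_i\mid M_p})]-o(1)$. Finally, soft covering shows that $P_{S_i\mid M_p}$ is, on average over $M_p$, close in total variation to $P_{S\mid U}(\cdot\mid U_i(M_p))$, and that the law of $U_i(M_p)$ is close to $P_U$; hence $\frac1n\sum_i\Ebb[\phi(P_{S_i\mid M_p})]\to\Ebb_U[\min_t\Ebb[\pi(S,t)\mid U]]=\min_{t(u)}\Ebb[\pi(S,t(U))]\ge\Pi$, which completes the argument. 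The main obstacle is engineering a single source code that simultaneously reconstructs $S^n$ losslessly at rate $R_s>H(S|U)$, outputs $(M_p,M_s)$ satisfying the conditional near-uniformity (\ref{condunif}) needed to invoke $C_S$, and induces a joint law on $(S^n,U^n)$ close enough to i.i.d.\ that both the non-product defect and the final distortion identity hold; the likelihood encoder of \cite{Cuff2010} delivers all three.
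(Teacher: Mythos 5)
Your proposal is correct and follows essentially the same route as the paper, which itself only supplies a proof sketch plus the key Lemma~\ref{connect}: a soft-covering/likelihood encoder from \cite{Cuff2010} producing $(M_p,M_s)$ satisfying (\ref{condunif}), handing Eve $M_p$ for free, bounding $\frac1n I(S^n;Z^n\mid M_p)\le\frac1n I(M_s;Z^n\mid M_p)\to 0$ via the Markov chain $S^n-(M_p,M_s)-Z^n$ and the definition of $C_S$, and then Pinsker plus the induced memoryless structure $P_{S|U}$ to reduce Eve's optimal strategy to a function of $U_i$. The only difference is packaging: you fold the removal of both $Z^n$ and $S^{i-1}$ into the single quantity $\frac1n\sum_i I(S_i;S^{i-1},Z^n\mid M_p)$ and argue through the concave, TV-Lipschitz functional $\phi$, whereas the paper's Lemma~\ref{connect} removes only $Z^n$ (via a time-sharing variable $Q$, a distribution swap to the Markov chain $\ovr{P}$, and Lemma~\ref{suffstat}) and defers the removal of $S^{i-1}$ to the omitted soft-covering analysis.
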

The lower bound in Theorem~\ref{innerbnd} follows from Theorems~\ref{bcc} and \ref{sourcethm}. The main idea in the proof of Theorem~\ref{sourcethm} is to use the public message to specify a sequence $U^n$ that is correlated with $S^n$, and use the secure message to encode the supplement that is needed to fully specify the source sequence. The source encoder is defined in such a way that, conditioned on the public message $M_p$, the adversary views the source as if it were generated by passing $U^n$ through a memoryless channel $P_{S|U}$. With this perspective, the past $S^{i-1}$ will no longer help the adversary; Eve's best strategy is to choose a function $t$ that maps $U_i$ to $T_i$. 

Although we omit the full proof of Theorem~\ref{sourcethm}, we provide the crucial lemma that shows how the weak secrecy provided by a good channel code is used in analyzing the payoff. The result of Lemma~\ref{connect} (below) is that we can view Eve as having full knowledge of $M_p$ and $S^{i-1}$ and no knowledge of $M_s$, which fulfills our goal of creating a secure channel and a public channel. In other words, using channel coding to create physical-layer security in the form of a structure $C_S$ allows us to show that, from Eve's perspective, knowledge of $(Z^n,S^{i-1})$ is no more helpful than $(M_p,S^{i-1})$ in easing the distortion. To parse the statement of the lemma, simply look at the arguments of $t(\cdot)$. 
\begin{lemma}
\label{connect}
If $P_{M_p,M_s}$ satisfies (\ref{condunif}) for every $n$, and \\$\{P_{Z^n|M_p,M_s}\}_{n=1}^\infty\in C_S$, then for all $\eps>0$,
\begin{IEEEeqnarray*}{rCl}
 \IEEEeqnarraymulticol{3}{l}{
\min_{t(i,s^{i-1},z^n)}\Ebb\left[\frac1n \sum_{i=1}^n\pi(S_i,t(i,S^{i-1},Z^n))\right]
}\\ \qquad
&\geq&\min_{t(i,s^{i-1},m_p)}\Ebb\left[\frac1n \sum_{i=1}^n\pi(S_i,t(i,S^{i-1},M_p))\right]-\delta(\eps)
\vspace{-3pt}
\end{IEEEeqnarray*}
for sufficiently large $n$, where $\delta(\eps)\rightarrow0$ as $\eps\rightarrow0$.
\end{lemma}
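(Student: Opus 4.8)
The plan is to show that replacing Eve's side information $M_p$ by $Z^n$ can lower her optimal average distortion by only a vanishing amount. First I would observe that giving Eve $M_p$ \emph{in addition to} $Z^n$ can only decrease her minimum distortion, so it suffices to prove
\[
\min_{t(i,s^{i-1},m_p,z^n)}\Ebb\!\left[\tfrac1n\sum_{i=1}^n\pi(S_i,t(i,S^{i-1},M_p,Z^n))\right]\ \ge\ \min_{t(i,s^{i-1},m_p)}\Ebb\!\left[\tfrac1n\sum_{i=1}^n\pi(S_i,t(i,S^{i-1},M_p))\right]-\delta(\eps).
\]
Since the minimizing map may be chosen separately for each $i$, both sides decompose as $\tfrac1n\sum_i\Ebb\big[\min_t\Ebb[\pi(S_i,t)\mid\cdot\,]\big]$, the conditioning being $(S^{i-1},M_p,Z^n)$ on the left and $(S^{i-1},M_p)$ on the right; let $\Delta_i\ge 0$ denote the $i$th gap. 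Because $t\mapsto\Ebb[\pi(S_i,t)\mid\cdot\,]$ is affine in the posterior law of $S_i$ and $\pi\le\pi_{\max}:=\max_{s,t}\pi(s,t)<\infty$ (finite alphabets), $\min_t\Ebb[\pi(S_i,t)\mid\cdot\,]$ is Lipschitz in the total variation of that posterior, whence $\Delta_i\le 2\pi_{\max}\,\Ebb\big[\lVert P_{S_i\mid S^{i-1},M_p,Z^n}-P_{S_i\mid S^{i-1},M_p}\rVert_{\mathrm{TV}}\big]$.

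Next I would apply Pinsker's inequality followed by Jensen's inequality (concavity of the square root) to get $\Delta_i\le 2\pi_{\max}\sqrt{\tfrac12\,I(S_i;Z^n\mid S^{i-1},M_p)}$, average over $i$, and use Jensen once more together with the chain rule $\sum_{i=1}^n I(S_i;Z^n\mid S^{i-1},M_p)=I(S^n;Z^n\mid M_p)$ to obtain $\tfrac1n\sum_{i=1}^n\Delta_i\le 2\pi_{\max}\sqrt{\tfrac1{2n}\,I(S^n;Z^n\mid M_p)}$. Finally, the source encoder produces $(M_p,M_s)$ from $S^n$ and $Z^n$ is generated from the fixed channel $P_{Z^n\mid M_p,M_s}$, so $S^n - (M_p,M_s) - Z^n$ is a Markov chain and $I(S^n;Z^n\mid M_p)\le I(S^n,M_s;Z^n\mid M_p)=I(M_s;Z^n\mid M_p)$. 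Since $(M_p,M_s)$ satisfies (\ref{condunif}) for every $n$ and $\{P_{Z^n\mid M_p,M_s}\}\in C_S$, the definition of $C_S$ gives $\tfrac1n I(M_s;Z^n\mid M_p)\to 0$; taking $n$ large enough that this is at most $\eps$ yields $\tfrac1n\sum_i\Delta_i\le 2\pi_{\max}\sqrt{\eps/2}=:\delta(\eps)\to 0$, which together with the first reduction proves the lemma.

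The step I expect to be the crux is turning \emph{weak} (normalized) secrecy into a distortion bound: applying Pinsker directly to the blocklength-$n$ laws $P_{S^n\mid M_p,Z^n}$ and $P_{S^n\mid M_p}$ would diverge, because only $\tfrac1n I(S^n;Z^n\mid M_p)$ is small, not $I(S^n;Z^n\mid M_p)$ itself. The fix is the telescoping identity $\sum_i I(S_i;Z^n\mid S^{i-1},M_p)=I(S^n;Z^n\mid M_p)$, which keeps the mutual information normalized, together with the two Jensen steps that pull the square root outside the sum; this is precisely where the per-symbol, causal structure of the payoff is essential. The remaining pieces—reducing to the case where $M_p$ is also revealed, the per-symbol total-variation Lipschitz estimate, and bounding $I(S^n;Z^n\mid M_p)$ by $I(M_s;Z^n\mid M_p)$ via the Markov chain—are routine.
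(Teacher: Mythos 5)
Your proposal is correct and follows essentially the same route as the paper: give Eve $M_p$ for free, use the chain rule and the Markov chain $S^n-(M_p,M_s)-Z^n$ to reduce to $\tfrac1n I(M_s;Z^n|M_p)<\eps$, then convert this to a distortion loss via Pinsker's inequality and the boundedness of $\pi$. The only (cosmetic) difference is that the paper packages the per-symbol decomposition into a time-sharing variable $Q$ and applies Pinsker once to the joint law versus its Markov-chain surrogate $\ovr{P}$, whereas you apply Pinsker symbol-by-symbol and pull the square root out with Jensen; the resulting bounds coincide.
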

\begin{proof}
Let $\eps>0$. Introduce the random variable \\$Q\sim\text{Unif}[1:n]$, independent of all other random variables present. First, we have
\begin{IEEEeqnarray*}{rCl}
 \IEEEeqnarraymulticol{3}{l}{
 I(S_Q;Z^n|M_pS^{Q-1}Q)}\\ \qquad
 &=& \frac1n \sum_{i=1}^nI(S_i;Z^n|M_pS^{i-1}) \\
 &=& \frac1n I(S^n;Z^n|M_p) \\
 &\leq& \frac1n I(M_sS^n;Z^n|M_p) \\
 &=& \frac1n I(M_s;Z^n|M_p)+\frac1n I(S^n;Z^n|M_pM_s) \\
 &=& \frac1n I(M_s;Z^n|M_p)\\
 &\leq& \frac1n I(M_s;Z^n,M_p)\\
 &<& \eps
\end{IEEEeqnarray*}
for sufficiently large $n$, where the final inequality follows the definition of $C_S$. Next, denote $P=P_{S^QZ^nM_pQ}$ and define the distribution 
\begin{equation*}
 \ovr{P}=P_{M_pS^{Q-1}Q}P_{S_Q|M_pS^{Q-1}Q}P_{Z^n|M_pS^{Q-1}Q}.
\end{equation*}
That is, $\ovr{P}$ is the markov chain $S_Q-M_pS^{Q-1}Q-Z^n$. Now, using Pinsker's inequality (see \cite{Cover}), we have
\begin{IEEEeqnarray*}{rCl}
\lVert P-\ovr{P} \rVert_{TV} &\leq& D(P\Vert \ovr{P})^{\frac12}\\
 &=& I(S_Q;Z^n|M_pS^{Q-1},Q)^{\frac12} \\
 \yesnumber \label{connect1}&<& \sqrt{\eps},
\end{IEEEeqnarray*}
where $\lVert P-Q \rVert_{TV}:=\sup_A|P(A)-Q(A)|$ is the variational distance between distributions $P$ and $Q$. Finally,
\begin{IEEEeqnarray*}{rCl}
\IEEEeqnarraymulticol{3}{l}{
 \min_{t(i,s^{i-1},z^n)}\Ebb\left[\frac1n \sum_{i=1}^n\pi(S_i,t(i,S^{i-1},Z^n))\right]}\\
 \yesnumber \label{connect_discussion} &\geq& \min_{t(i,s^{i-1},z^n,m_p)}\Ebb\left[\frac1n \sum_{i=1}^n\pi(S_i,t(i,S^{i-1},Z^n,M_p))\right] \\
 &=& \min_{t(i,s^{i-1},z^n,m_p)}\Ebb\left[\pi(S_Q,t(Q,S^{Q-1},Z^n,M_p))\right] \\
 \yesnumber \label{connect2}&\geq& \min_{t(i,s^{i-1},z^n,m_p)}\Ebb_{\ovr{P}}\left[\pi(S_Q,t(Q,S^{Q-1},Z^n,M_p))\right]-\delta(\eps)\IEEEeqnarraynumspace\\
 \yesnumber \label{connect3}&=& \min_{t(i,s^{i-1},m_p)}\Ebb_{\ovr{P}}\left[\pi(S_Q,t(Q,S^{Q-1},M_p))\right]-\delta(\eps)\\
 \yesnumber \label{connect4}&\geq& \min_{t(i,s^{i-1},m_p)}\Ebb\left[\pi(S_Q,t(Q,S^{Q-1},M_p))\right]-2\delta(\eps)\\
 &=& \min_{t(i,s^{i-1},m_p)}\Ebb\left[\frac1n\sum_{i=1}^n\pi(S_i,t(i,S^{i-1},M_p))\right]-2\delta(\eps),
\end{IEEEeqnarray*}
for sufficiently large $n$. The justification is as follows:
\begin{itemize}
 \item (\ref{connect_discussion}): Give Eve $M_p$ for free.
 \item (\ref{connect2}): Change the underlying distribution from $P$ to $\ovr{P}$ by using (\ref{connect1}) and Lemma~\ref{tv_expect}.
 \item (\ref{connect3}): Use Lemma~\ref{suffstat} since $\ovr{P}$ is a markov chain.
 \item (\ref{connect4}): Change the underlying distribution back to $P$.
\end{itemize}
\end{proof}
\begin{lemma}[Used in Lemma~\ref{connect} proof]
\label{suffstat}
Let $X,Y$, and $Z$ be random variables that form a markov chain $X-Y-Z$ and let $g$ be a function on $\Acal \times \Zcal$. Define two sets of functions, $F=\{f:\Xcal\times\Ycal\rightarrow \Acal\}$ and $F'=\{f:\Ycal\rightarrow\Acal\}$. Then
\begin{equation*}
 \min_{f\in F}\Ebb[g(f(X,Y),Z)]=\min_{f\in F'}\Ebb[g(f(Y),Z)].
\end{equation*}
\end{lemma}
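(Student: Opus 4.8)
\medskip

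\noindent\emph{Proof proposal.} The plan is to establish the two inequalities separately. The direction ``$\le$'' is immediate: any $f'\in F'$ may be regarded as an element of $F$ that ignores its first argument, so minimizing the same objective over the larger class $F$ can only decrease it.

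For ``$\ge$'', fix an arbitrary $f\in F$ and condition on $Y$. The crux is that the Markov chain $X-Y-Z$ makes $Y$ a sufficient statistic for $Z$: since $\Pbb[Z=z\mid X=x,Y=y]=\Pbb[Z=z\mid Y=y]$, we have for every $y$
\begin{equation*}
 \Ebb\big[g(f(X,Y),Z)\mid Y=y\big]=\sum_{x}\Pbb[X=x\mid Y=y]\,h(f(x,y),y),
\end{equation*}
where $h(a,y):=\Ebb[g(a,Z)\mid Y=y]=\sum_{z}\Pbb[Z=z\mid Y=y]\,g(a,z)$. The right-hand side is a convex combination of the numbers $\{h(a,y):a\in\Acal\}$, hence is at least $\min_{a\in\Acal}h(a,y)$. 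Because $\Acal$ is finite, let $f'(y)$ be a minimizer of $h(\cdot,y)$ over $\Acal$ (ties broken arbitrarily); then $f'\in F'$ and
\begin{equation*}
 \Ebb\big[g(f'(Y),Z)\mid Y=y\big]=h(f'(y),y)=\min_{a\in\Acal}h(a,y)\le\Ebb\big[g(f(X,Y),Z)\mid Y=y\big].
\end{equation*}
Averaging over $Y$ gives $\Ebb[g(f'(Y),Z)]\le\Ebb[g(f(X,Y),Z)]$; since $f\in F$ was arbitrary and $f'\in F'$, this yields $\min_{f'\in F'}\Ebb[g(f'(Y),Z)]\le\min_{f\in F}\Ebb[g(f(X,Y),Z)]$. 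Together with ``$\le$'', this proves the lemma.

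There is no substantial obstacle here; the only points needing care are the interchange that invokes the Markov property to remove the conditioning on $X$ from the conditional law of $Z$, and the appeal to finiteness of $\Acal$ so that the minimizer defining $f'$ is attained and $f'$ is a legitimate (deterministic) element of $F'$.
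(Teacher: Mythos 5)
Your proof is correct. The paper omits the proof of this lemma entirely (``Proof omitted''), so there is no argument of record to compare against; what you give is the standard sufficient-statistic argument, and it is complete. The easy inequality follows from viewing $F'$ as a subclass of $F$, and the substantive direction follows from conditioning on $Y=y$, using the conditional independence of $X$ and $Z$ given $Y$ to reduce the objective to a convex combination of the values $h(a,y)=\Ebb[g(a,Z)\mid Y=y]$, and then choosing $f'(y)$ as a minimizer over the finite set $\Acal$. The two points you single out --- the interchange justified by the Markov property and the attainment of the minimizer defining $f'$ --- are precisely the ones that require care, and you handle both correctly.
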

Proof omitted.
\begin{lemma}[Used in Lemma~\ref{connect} proof]
\label{tv_expect}
 Let $f(x)$ be a bounded function, and $P$ and $Q$ pmfs on $\Xcal$. Then 
$$\Ebb_{P}[f(X)]\rightarrow \Ebb_Q[f(X)]\quad\text{as}\quad P\xrightarrow{TV}Q$$
\end{lemma}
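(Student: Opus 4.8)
The plan is to reduce everything to the elementary bound relating a difference of expectations to the $\ell_1$ distance between the two pmfs. Set $B:=\sup_{x}|f(x)|$, which is finite since $f$ is bounded. First I would write, for any pmfs $P,Q$ on $\Xcal$,
\[
 \bigl|\Ebb_{P}[f(X)]-\Ebb_{Q}[f(X)]\bigr|
 =\Bigl|\sum_{x\in\Xcal} f(x)\bigl(P(x)-Q(x)\bigr)\Bigr|
 \le B\sum_{x\in\Xcal}\bigl|P(x)-Q(x)\bigr|,
\]
where the termwise triangle inequality is legitimate because $\sum_{x}|P(x)-Q(x)|\le 2$, so the series converges absolutely (no issue even on a countably infinite alphabet; in the paper's finite-alphabet setting this is immediate).

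Next I would identify $\sum_{x}|P(x)-Q(x)|$ with the total-variation distance as defined in the excerpt, $\lVert P-Q\rVert_{TV}=\sup_{A}|P(A)-Q(A)|$. The standard fact is $\sum_{x}|P(x)-Q(x)|=2\lVert P-Q\rVert_{TV}$: the supremum over $A$ is attained at $A^\ast=\{x:P(x)\ge Q(x)\}$, for which $|P(A^\ast)-Q(A^\ast)|=\sum_{x\in A^\ast}\bigl(P(x)-Q(x)\bigr)$ equals the total positive part of $P-Q$, which in turn equals the total negative part (because $\sum_x(P(x)-Q(x))=0$ since both sum to $1$), hence equals $\tfrac12\sum_x|P(x)-Q(x)|$. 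Combining the two displays gives $\bigl|\Ebb_{P}[f(X)]-\Ebb_{Q}[f(X)]\bigr|\le 2B\,\lVert P-Q\rVert_{TV}$, and letting $\lVert P-Q\rVert_{TV}\to 0$ yields the claimed convergence.

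There is essentially no obstacle here; the only things to be mildly careful about are the normalization constant between the $\ell_1$ distance and the paper's definition of TV distance (a factor of $2$), and---only if one insists on full generality---the absolute convergence needed to pass the triangle inequality inside an infinite sum, which is automatic from $\sum_x|P(x)-Q(x)|\le 2$.
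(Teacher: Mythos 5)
Your proof is correct and follows essentially the same route as the paper's: bound $|\Ebb_P[f(X)]-\Ebb_Q[f(X)]|$ by $f_{\max}\sum_x|P(x)-Q(x)|$ and identify the $\ell_1$ sum with $2\lVert P-Q\rVert_{TV}$. The extra care you take with the factor of $2$ and with absolute convergence on a countable alphabet is fine but not needed in the paper's finite-alphabet setting.
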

\begin{proof}[Proof of Lemma~\ref{tv_expect}]
 \begin{IEEEeqnarray*}{rCl}
\IEEEeqnarraymulticol{3}{l}{
 \Ebb_{P}[f(X)]-\Ebb_Q[f(X)]}\\ 
 \qquad &=& \sum_{x\in\Xcal} (P(x)-Q(x))f(x)\\
 &\leq& f_{\max}\sum_x|P(x)-Q(x)|\\
 &=& 2\,f_{\max}\lVert P-Q \rVert_{TV}.
\end{IEEEeqnarray*}
\end{proof}

\section{\label{example} Example}
In this section, we give an analytical expression for the region in Theorem~\ref{innerbnd} when the value function $\pi(s,t)$ is Hamming distance and the broadcast channel is binary symmetric. We first consider the separate regions that we stated for source coding and channel coding (i.e., Theorems~\ref{bcc} and \ref{sourcethm}), then we join the results. Finally, we use the example to illustrate the difference between Theorems~\ref{innerbnd} and \ref{improved}.

Denoting the region in Theorem~\ref{bcc} by $\Rcal$, we have the following.
\begin{thm}
\label{channelex}
For a binary symmetric broadcast channel with $P_{Y|X}=\emph{BSC}(p_1)$, $P_{Z|X}=\emph{BSC}(p_2)$, and $p_1\leq p_2$, we have
\begin{equation*}
 \Rcal \hspace{-1pt}=\hspace{-5pt} \bigcup_{0\leq \gamma \leq \frac12}\left\{
 \begin{array}{l}
 \hspace{-3pt}(R_p,R_s):\\
 \hspace{-3pt}R_p< 1-h(\gamma * p_1)\\
 \hspace{-3pt}R_s< h(\gamma * p_1)-h(\gamma * p_1)-h(p_1)+h(p_2)\hspace{-2pt}
 \end{array}\hspace{-4pt}
 \right\}
\end{equation*}
where $h(\cdot)$ is the binary entropy function.
\end{thm}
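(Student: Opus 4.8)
\emph{Proof proposal.} The plan is to evaluate the region $\Rcal$ of Theorem~\ref{bcc} for the binary symmetric broadcast channel by establishing the two inclusions separately. Throughout, I use the observation (made in the problem statement) that $\Rcal$ depends on $P_{Y,Z|X}$ only through its marginals, so I may work with the stochastically degraded representation $X-Y-Z$: write $Y=X\oplus N_1$ and $Z=Y\oplus N'$ with $N_1\sim\text{Bern}(p_1)$ and $N'\sim\text{Bern}(p')$ independent, where $p':=(p_2-p_1)/(1-2p_1)$ is chosen so that $p_1*p'=p_2$ (well defined since $p_1\le p_2\le 1/2$). The only nonroutine ingredient is Mrs.\ Gerber's Lemma for binary‑input channels: if $B$ is binary and $C=B\oplus N$ with $N\sim\text{Bern}(q)$ independent of $(B,V)$, then $H(C\mid V)\ge h\!\left(q*h^{-1}(H(B\mid V))\right)$.

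\emph{Inclusion $\supseteq$ (achievability of the stated set).} Fix $\gamma\in[0,\tfrac12]$ and feed Theorem~\ref{bcc} the single natural choice $V\sim\text{Bern}(\tfrac12)$, $X=V\oplus B$ with $B\sim\text{Bern}(\gamma)$ independent of $V$, and $W=X$ (a valid factorization $P_VP_{W|V}P_{X|W}$). Then $X$ is uniform, so $Y=V\oplus(B\oplus N_1)$ with $B\oplus N_1\sim\text{Bern}(\gamma*p_1)$ and $Z=V\oplus(B\oplus N_1\oplus N')$ with the sum $\sim\text{Bern}(\gamma*p_2)$. A direct computation gives $I(V;Y)=1-h(\gamma*p_1)$ and, since $W=X$, $I(W;Y|V)-I(W;Z|V)=I(X;Y|V)-I(X;Z|V)=[h(\gamma*p_1)-h(p_1)]-[h(\gamma*p_2)-h(p_2)]$, which are exactly the two bounds in the statement. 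Taking the union over $\gamma\in[0,\tfrac12]$ shows the claimed region is contained in $\Rcal$.

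\emph{Inclusion $\subseteq$ (converse).} Let $(V,W,X)$ be any triple with $R_p<I(V;Y)$ and $R_s<I(W;Y|V)-I(W;Z|V)$. First reduce $W$ to $X$: using $Y,Z\perp W\mid X,V$ one expands $I(W;Y|V)-I(W;Z|V)=[I(X;Y|V)-I(X;Z|V)]-[I(X;Y|W,V)-I(X;Z|W,V)]$, and the last bracket is nonnegative by the conditional data‑processing inequality for the degraded chain $X-Y-Z$; hence $I(W;Y|V)-I(W;Z|V)\le I(X;Y|V)-I(X;Z|V)=H(Y|V)-H(Z|V)-h(p_1)+h(p_2)$, while $I(V;Y)$ is untouched. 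Since $h(p_1)=H(Y|X)\le H(Y|V)\le 1$, pick the unique $\gamma\in[0,\tfrac12]$ with $h(\gamma*p_1)=H(Y|V)$. Then $I(V;Y)=H(Y)-H(Y|V)\le 1-h(\gamma*p_1)$, and by Mrs.\ Gerber's Lemma applied to $Z=Y\oplus N'$, $H(Z|V)\ge h\!\left(p'*h^{-1}(H(Y|V))\right)=h(p'*\gamma*p_1)=h(\gamma*p_2)$ (associativity of $*$ and $p'*p_1=p_2$), so $H(Y|V)-H(Z|V)-h(p_1)+h(p_2)\le h(\gamma*p_1)-h(\gamma*p_2)-h(p_1)+h(p_2)$. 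Thus $(R_p,R_s)$ lies in the $\gamma$-th set of the union.

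\emph{Main obstacle.} The delicate point is the $W\mapsto X$ reduction with the correct accounting of conditional independences (it is here that stochastic degradedness, i.e.\ $p_1\le p_2$, is essential); once that is done the problem collapses to a one‑parameter optimization in $H(Y|V)$ that Mrs.\ Gerber's Lemma solves exactly, with the composition identity $p'*\gamma*p_1=\gamma*p_2$ pinning down the matching endpoint. Two things are painless: non‑uniform $X$ needs no separate treatment, since the step $I(V;Y)\le 1-H(Y|V)$ uses only $H(Y)\le 1$ and any loss is already carried by $H(Y|V)$; and aligning the two unions reduces to continuity and monotonicity of $\gamma\mapsto h(\gamma*p_1)$ on $[0,\tfrac12]$, together with routine handling of the strict inequalities.
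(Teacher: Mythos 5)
Your proof is correct and is the canonical evaluation of the Csisz\'ar--K\"orner-type region for a degraded binary symmetric broadcast channel: achievability with $V\sim\mathrm{Bern}(\tfrac12)$, $X=V\oplus\mathrm{Bern}(\gamma)$, $W=X$, and a converse that reduces $W$ to $X$ via degradedness and then invokes the conditional Mrs.\ Gerber's Lemma. The paper omits the proof of this theorem entirely, so there is nothing to compare against, but this is surely the intended argument; note also that you have (correctly) proved the evidently intended statement, since the printed bound $R_s< h(\gamma * p_1)-h(\gamma * p_1)-h(p_1)+h(p_2)$ contains a typo and should read $h(\gamma * p_1)-h(\gamma * p_2)-h(p_1)+h(p_2)$, as confirmed by its use in Theorem~\ref{inner_ex}.
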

The source coding achievability region in Theorem~\ref{sourcethm} is given as a union of $(R_p,R_s,\Pi)$ subregions, one subregion for each choice of $P_{U|S}$. However, once $P_{U|S}$ is fixed, $(R_p,R_s,\Pi)$ triples on the boundary satisfy $R_p=H(S)-R_s$, so we restrict our attention to $(R_s,\Pi)$ pairs. According to Theorem~\ref{sourcethm}, the boundary is given by
\begin{equation}
\label{secrpayfunc}
 \Pi(R_s)=\max_{\substack{P_{U|S}:\\ H(S|U)=R_s}}\min_{t(u)}\Ebb[\pi(S,t(U))].
\end{equation}
When $\pi(s,t)$ is Hamming distance, we give an analytical expression for (\ref{secrpayfunc}).
\begin{thm}
 \label{sourceex}
 Fix $P_S$ and let $\pi(s,t)=1_{\{s\neq t\}}$. Define the function $f(R)$ as the linear interpolation of the points $(\log n,\frac{n-1}{n}),n=1,2,\ldots$ and the function $g(R)$ as the constant $1-\max_s P_S(s)$. Then (\ref{secrpayfunc}) is given by
\begin{equation*}
 \Pi_H(R_s)=\min(f(R_s),g(R_s)).
\end{equation*}
\end{thm}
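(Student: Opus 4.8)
The plan is to prove the two upper bounds $\Pi_H(R_s)\le g(R_s)$ and $\Pi_H(R_s)\le f(R_s)$ and then match them by time-sharing a short list of explicit test channels. For any $P_{U|S}$, writing $\eps_u=1-\max_s P_{S|U}(s|u)$ we have $\min_{t(u)}\Ebb[\pi(S,t(U))]=\sum_u P_U(u)\,\eps_u$. The bound by $g$ is immediate: let $s^{\star}$ maximize $P_S$; then $\eps_u\le 1-P_{S|U}(s^{\star}|u)$, and averaging gives $\sum_u P_U(u)\eps_u\le 1-P_S(s^{\star})=g(R_s)$ (Eve ignores $U$ and always guesses $s^{\star}$).

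For the bound by $f$, I would first establish a per-letter ``Fano-type'' inequality: for every pmf $q$ on $\Scal$ with $a=\max_s q(s)$ and $k=\lfloor 1/a\rfloor$, the pmf $(a,\ldots,a,1-ka)$ ($k$ copies of $a$) majorizes $q$, so Schur-concavity of entropy yields $H(q)\ge \beta(a):=-ka\log a-(1-ka)\log(1-ka)$. A direct computation gives $\beta''(a)<0$ on each interval $a\in[\tfrac{1}{k+1},\tfrac{1}{k}]$, so $\beta$ is concave there; since $\beta(\tfrac1k)=\log k$ and $\beta(\tfrac1{k+1})=\log(k+1)$, this concave curve lies above the chord joining $(\tfrac{k-1}{k},\log k)$ and $(\tfrac{k}{k+1},\log(k+1))$ in the $(1-a,H)$-plane, and that chord is exactly the graph of $f^{-1}$. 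Hence $H(q)\ge f^{-1}\!\big(1-\max_s q(s)\big)$, i.e. $1-\max_s q(s)\le f(H(q))$. Applying this with $q=P_{S|U=u}$ and then invoking Jensen with the concavity of $f$ (its successive slopes $\tfrac{1/(k(k+1))}{\log((k+1)/k)}$ are decreasing in $k$) gives $\sum_u P_U(u)\eps_u\le \sum_u P_U(u)\,f(H(P_{S|U=u}))\le f(H(S|U))=f(R_s)$.

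For achievability, note first that appending a time-sharing variable to $U$ combines any two test channels without changing the $S$-marginal, so the set of achievable $(R_s,\Pi)$ is convex; it thus suffices to realize the vertices of the region under $\min(f,g)$ on $[0,H(S)]$: the origin $(0,0)$ (take $U=S$), the point $(H(S),g)$ (take $U$ constant), the kink points $(\log n,\tfrac{n-1}{n})$ for $n\le n^{\star}:=\lfloor 1/\max_s P_S(s)\rfloor$, and---when $\max_s P_S(s)$ is not exactly $1/n^{\star}$---the crossing point $(f^{-1}(g),g)$. The kink point $(\log n,\tfrac{n-1}{n})$ comes from a decomposition lemma: if $\max_s P_S(s)\le 1/n$ then $P_S=\sum_j\lambda_jQ_j$ with each $Q_j$ uniform on an $n$-element subset of $\Scal$, proved by repeatedly subtracting a suitable multiple of the uniform law on the $n$ currently-heaviest symbols (a majorization/Birkhoff argument that strictly shrinks the support each step); taking $U$ to name $j$ gives $H(S|U)=\log n$ and payoff $\tfrac{n-1}{n}$.

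The delicate case---and what I expect to be the main obstacle---is the crossing point when $p^{\star}:=\max_s P_S(s)\in(\tfrac1{n^{\star}+1},\tfrac1{n^{\star}})$. Here I would use the identity $f^{-1}(g)=(1-\theta)\log n^{\star}+\theta\log(n^{\star}+1)$ with $\theta=(n^{\star}+1)(1-n^{\star}p^{\star})$, read off from the linear form of $f^{-1}$ on $[\log n^{\star},\log(n^{\star}+1)]$, which suggests building $U$ so that each $P_{S|U=u}$ is uniform on a subset of size $n^{\star}$ or $n^{\star}+1$ containing $s^{\star}$: every such law has $s^{\star}$ as a mode, so the payoff is exactly $1-p^{\star}=g$, and allocating total weight $\lambda=n^{\star}\big((n^{\star}+1)p^{\star}-1\big)$ to the size-$n^{\star}$ laws makes the average conditional entropy $(1-\theta)\log n^{\star}+\theta\log(n^{\star}+1)=f^{-1}(g)$. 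The crux is the \emph{feasibility} of choosing these subsets and their weights so that the induced $S$-marginal is exactly $P_S$; this is a transportation/flow problem whose only ``degree'' constraint is that each $s\ne s^{\star}$ receives mass at most $\tfrac{\lambda}{n^{\star}}+\tfrac{1-\lambda}{n^{\star}+1}=p^{\star}$, which holds because $P_S(s)\le p^{\star}$, so a Hall/flow-type argument should close it. Once all vertices are realized, their convex hull is exactly the region under $\min(f,g)$ on $[0,H(S)]$---the $f$-pieces are concave, $(f^{-1}(g),g)$ lies on the last $f$-piece so the upper boundary continues collinearly and then stays flat at height $g$ on $[f^{-1}(g),H(S)]$ (and $f^{-1}(g)\le H(S)$ by the $f$-bound applied to $q=P_S$)---so together with the upper bounds we get $\Pi_H(R_s)=\min(f(R_s),g(R_s))$.
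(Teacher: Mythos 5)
The paper itself omits the proof of this theorem, so there is no in-text argument to compare yours against; judged on its own terms, your converse half is complete and correct. The bound by $g$ is immediate as you say, and the bound by $f$ works: the law $(a,\ldots,a,1-ka)$ with $k=\lfloor 1/a\rfloor$ majorizes every pmf with maximum $a$, your computation $\beta''(a)=k(-k/(1-ka)-1/a)<0$ gives concavity of $\beta$ on each $[\tfrac{1}{k+1},\tfrac1k]$ so that $\beta$ dominates the chord defining $f^{-1}$, and the successive slopes of $f$ do decrease (each lies in $[\tfrac{1}{k+1},\tfrac1k]$), so Jensen applies. The decomposition lemma behind the kink points is also true; rather than the greedy subtraction (which needs care to keep the residual's maximum below $1/n$ of its remaining mass), the cleanest route is to observe that the polytope $\{q:\sum_s q(s)=1,\ 0\le q(s)\le 1/n\}$ has exactly the uniform laws on $n$-element subsets as its vertices.

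The one genuine gap is the step you yourself flag: the feasibility of the subset/weight assignment at the crossing point is asserted, not proved, and your stated reason for the payoff there is off --- a conditional uniform on a size-$m$ subset contributes per-letter payoff $1-1/m$ whether or not $s^{\star}$ is its mode; the value $g=1-p^{\star}$ appears only after averaging with your weight $\lambda$, and the role of forcing $s^{\star}$ into every subset is solely to match the marginal at $s^{\star}$. The feasibility claim is nonetheless true and can be closed without invoking Hall's theorem: the convex hull of uniform laws on $m$-subsets containing $s^{\star}$ is exactly $\mathcal{Q}_m=\{Q:Q(s^{\star})=1/m,\ 0\le Q(s)\le 1/m\ \forall s\}$ (same vertex argument as above), so it suffices to show $P_S\in\lambda\mathcal{Q}_{n^{\star}}+(1-\lambda)\mathcal{Q}_{n^{\star}+1}$. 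Coordinatewise one must choose $(x_s,y_s)\in[0,1/n^{\star}]\times[0,1/(n^{\star}+1)]$ with $\lambda x_s+(1-\lambda)y_s=P_S(s)$ and $\sum_{s\ne s^{\star}}x_s=(n^{\star}-1)/n^{\star}$; the constraint set is a product of segments on which $\sum_s x_s$ is linear, and a short computation using only $P_S(s)\le p^{\star}$ and $p^{\star}\le 1/n^{\star}$ shows $\sum_s x_s^{\min}\le(n^{\star}-1)/n^{\star}\le\sum_s x_s^{\max}$. With that inserted, your vertex list together with time-sharing does yield $\min(f,g)$ on $[0,H(S)]$, and the proof is complete.
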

We now use Theorems~\ref{channelex} and \ref{sourceex} to give an expression for Theorem~\ref{innerbnd} when the value function is Hamming distance and the broadcast channel is binary symmetric. 
\begin{thm}
 \label{inner_ex}
 With source $P_S$, channels $P_{Y|X}=\emph{BSC}(p_1)$ and $P_{Z|X}=\emph{BSC}(p_2)$, and value function $\pi(s,t)=1_{\{s\neq t\}}$, an achievable payoff is
 \begin{equation*}
\Pi=\begin{cases}
 \Pi_H(h(\gamma * p_1)-h(\gamma * p_2) - h(p_1)+h(p_2)) \\
  \qquad \text{if } h(p_2)-h(p_1) < H(S) < 1-h(p_1)\\
 \Pi_H(H(S)) \\
  \qquad \text{if } H(S)\leq h(p_2)-h(p_1)
\end{cases}
 \end{equation*}
where $\gamma\in[0,\frac12]$ solves $H(S)=1-h(\gamma * p_2) - h(p_1)+h(p_2)$ and $\Pi_H(R_s)$ is as in Theorem~\ref{sourceex}.
\end{thm}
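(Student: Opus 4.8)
The plan is to specialize the rate-matching argument behind Theorem~\ref{innerbnd} to this case, i.e.\ to combine the channel region of Theorem~\ref{channelex} with the source payoff function of Theorem~\ref{sourceex}. By Theorem~\ref{sourcethm}, a source code producing a public message of rate $R_p$ and a secure message of rate $R_s$ can be built from any $P_{U|S}$ with $R_p>I(S;U)$ and $R_s>H(S|U)$, earning $\min_{t(u)}\Ebb[\pi(S,t(U))]$; Theorem~\ref{sourceex} says that maximizing this over all $P_{U|S}$ with $H(S|U)$ held at $r$ yields $\Pi_H(r)$, a nondecreasing and continuous function (since $f$ and $g$ are). Feeding such a source code into a channel code from Theorem~\ref{channelex} and matching the rates, the achievable payoff is thus $\Pi_H(r)$ where $r$ is the largest number, subject to $r\le H(S)$, for which the pair $(H(S)-r,\,r)$ can be carried by the channel; the strict inequalities in the two theorems are folded into a closure argument at the end.

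First I would set up the optimization over the channel parameter. Write $R_p(\gamma):=1-h(\gamma*p_1)$ and $U(\gamma):=h(\gamma*p_1)-h(\gamma*p_2)-h(p_1)+h(p_2)$ for the two bounds in Theorem~\ref{channelex}, and put $L(\gamma):=H(S)-R_p(\gamma)=H(S)-1+h(\gamma*p_1)$. Then $(H(S)-r,r)$ is carried by the channel iff $L(\gamma)\le r\le U(\gamma)$ for some $\gamma\in[0,\tfrac12]$, so I need $\sup_\gamma\{U(\gamma):L(\gamma)\le U(\gamma)\}$. Two elementary monotonicity facts about the binary entropy function control this: \emph{(i)} $h(\gamma*p_2)$ is strictly increasing in $\gamma$ on $[0,\tfrac12]$, so $L(\gamma)\le U(\gamma)$ — which rearranges to $H(S)\le 1-h(\gamma*p_2)-h(p_1)+h(p_2)$ — holds exactly for $\gamma$ up to the value $\gamma^\star$ solving $H(S)=1-h(\gamma^\star*p_2)-h(p_1)+h(p_2)$ (interpreting $\gamma^\star=\tfrac12$ when the right-hand side stays above $H(S)$ throughout); and \emph{(ii)} $U(\gamma)$ is nondecreasing, because $U'(\gamma)=(1-2p_1)h'(\gamma*p_1)-(1-2p_2)h'(\gamma*p_2)\ge(1-2p_2)\bigl(h'(\gamma*p_1)-h'(\gamma*p_2)\bigr)\ge0$, using $0\le 1-2p_2\le 1-2p_1$ together with $\gamma*p_1\le\gamma*p_2\le\tfrac12$ and the fact that $h'$ is nonnegative and decreasing on $(0,\tfrac12]$. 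Combining (i) and (ii), the largest feasible secure rate is $\min\bigl(H(S),\,U(\gamma^\star)\bigr)$.

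Next I would evaluate this in the two regimes. If $h(p_2)-h(p_1)<H(S)<1-h(p_1)$, fact (i) gives a unique $\gamma^\star\in(0,\tfrac12)$, and then $U(\gamma^\star)=h(\gamma^\star*p_1)-h(\gamma^\star*p_2)-h(p_1)+h(p_2)=H(S)-1+h(\gamma^\star*p_1)\le H(S)$, so the maximal secure rate is $U(\gamma^\star)$ and the payoff is $\Pi_H\bigl(h(\gamma^\star*p_1)-h(\gamma^\star*p_2)-h(p_1)+h(p_2)\bigr)$ — the first branch. If instead $H(S)\le h(p_2)-h(p_1)$, then for $\gamma$ just below $\tfrac12$ one has $R_p(\gamma)>0$ while $U(\gamma)\to h(p_2)-h(p_1)\ge H(S)$, so $(H(S)-r,r)$ is carried by the channel for every $r$ slightly above $H(S)$; choosing $U$ essentially independent of $S$ (so $H(S|U)\approx H(S)$, $I(S;U)\approx0$) gives payoff $\Pi_H(H(S))$ — the second branch, and the two branches agree at $H(S)=h(p_2)-h(p_1)$ where $\gamma^\star=\tfrac12$. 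In each case the relevant supremum is approached rather than attained, but since the set of achievable payoffs is closed (a standard diagonalization over blocklengths) and $\Pi_H$ is continuous, the stated value is achievable. When $H(S)\ge 1-h(p_1)$ the channel cannot even deliver the source losslessly to Bob, which is why that range is left out.

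The step I expect to be the main obstacle is fact (ii): showing $U(\gamma)=h(\gamma*p_1)-h(\gamma*p_2)-h(p_1)+h(p_2)$ is nondecreasing on $[0,\tfrac12]$, which is what forces the optimal channel parameter to the feasibility boundary $\gamma=\gamma^\star$ rather than to an interior point. The derivative bound above settles it cleanly, but one has to be a little careful at $\gamma=0$ when some $p_i=0$ (where $h'$ diverges) and should confirm that $\gamma^\star$ is genuinely unique and lies in the claimed sub-interval. The remaining work is the routine reconciliation of the strict rate inequalities in Theorems~\ref{bcc}, \ref{sourcethm}, \ref{channelex}, and \ref{sourceex} with the non-strict (closed) payoff claim, which the continuity/closedness remark takes care of.
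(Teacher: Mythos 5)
Your proposal is correct and follows essentially the route the paper indicates (the paper omits the detailed proof but states that Theorem~\ref{inner_ex} is obtained by combining Theorems~\ref{channelex} and~\ref{sourceex}): match $R_p=H(S)-R_s$ against the BSC channel region, note that the secure-rate bound $U(\gamma)=h(\gamma * p_1)-h(\gamma * p_2)-h(p_1)+h(p_2)$ is nondecreasing in $\gamma$ so the optimum sits at the rate-matching boundary $H(S)=1-h(\gamma * p_2)-h(p_1)+h(p_2)$, with the cap $R_s\leq H(S)$ taking over when $H(S)\leq h(p_2)-h(p_1)$. Your monotonicity computation for $U(\gamma)$ and the closure/continuity reconciliation of the strict rate inequalities are exactly the details the paper leaves implicit.
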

\begin{figure}
 \centering
 \scalebox{0.58}{
 \input{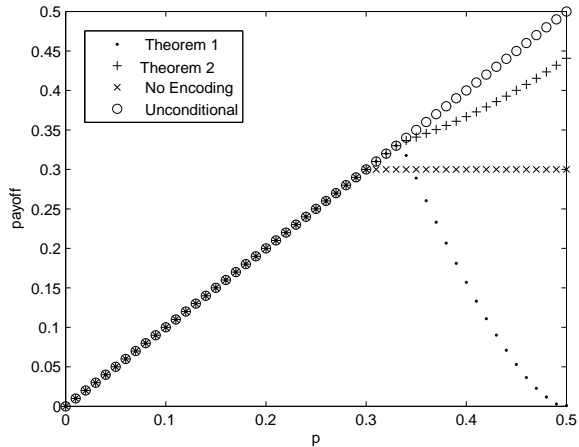}
 }
 \caption{The result of Thm \ref{inner_ex} with $P_S=\text{Bern}(p)$, $p_1=0$, and $p_2=0.3$. The ``Theorem~\ref{innerbnd}'' curve is directly Thm~\ref{inner_ex} and ``Theorem~\ref{improved}'' is the improvement to Thm~\ref{innerbnd}. ``Unconditional'' is the payoff when Eve only knows $P_S$, while ``No Encoding'' refers to $X^n=S^n$.}
\label{secrecy_plot}
\end{figure}
\subsection*{Discussion of Theorem~\ref{improved}}
When we established the lower bound in Theorem~\ref{innerbnd}, we did so with the structure of a public channel and secure channel in mind; however, as mentioned, the public channel may not be truly public. We made the assumption that the adversary is freely given $M_p$; indeed, the proof of Lemma~\ref{connect} illustrates this in (\ref{connect_discussion}), where we suffer a loss in our payoff analysis by including $M_p$ as an input to the adversary's strategy. 

We can strengthen the analysis of Theorem~\ref{innerbnd} by taking into account the equivocation of the public message. For blocklength $n$, the equivocation of the public message vanishes at a certain time $k$ due to the adversary's ongoing accumulation of past source symbols $\hat{S}^{k-1}$. Before time $k$, the payoff is $\Pi_{\max} = \min_t\Ebb[\pi(S,t)]$ (i.e., the unconditional payoff). After time $k$, the payoff is as in Theorem~\ref{innerbnd}. Denoting this payoff by $\Pi_1$, we can now achieve (Theorem~\ref{improved})
\begin{equation*}
\Pi_2 = \frac{k}{n} \Pi_{\max} + \left(1-\frac{k}{n}\right)\Pi_1.
\end{equation*}
The ratio $\frac{k}{n}$ is found to be
\begin{equation*}
 \frac{k}{n} \approx \frac{[I(V;Y)-I(V;Z)]^+}{I(S;U)}.
\end{equation*}
Figure \ref{secrecy_plot} shows the difference between Theorem~\ref{innerbnd} and Theorem~\ref{improved}, as well as a comparison to the curves that correspond to no encoding and unconditional payoff. Unconditional payoff refers to the distortion that Eve suffers if her only knowledge is the source distribution, and no encoding refers to simply taking the source as the input to the channel and bypassing the encoder. The example is for a Bernoulli source with bias $p$. If we assume that Eve has full knowledge of the public message, then we see that for, say, $p=0.4$, the distortion guaranteed by the weaker theorem is even worse than if no encoding was used. This illustrates the importance of Eve's equivocation of the public message. 
\section{Upper Bound}

The upper bound is established by using ideas from the converses in \cite{Cuff2010} and \cite{Csiszar--Korner1978}.
\begin{thm}
\label{outerbnd}
Fix $P_S$, $P_{YZ|X}$, and $\pi(s,t)$. If a payoff $\Pi$ is achievable, then the inequalities
\begin{IEEEeqnarray*}{rCl}
 H(S) &\leq& I(W;Y)\\
 H(S|U) &\leq& [I(W;Y|V)-I(W;Z|V)]^+\\
 \Pi &\leq& \min_{t(u)}\Ebb[\pi(S,t(U))]
\end{IEEEeqnarray*}
must hold for some distribution\\ $P_SP_{U|S}P_VP_{W|V}P_{X|W}P_{YZ|X}$.
\end{thm}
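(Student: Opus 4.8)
\emph{Plan.} Fix an achievable $\Pi$, together with a sequence of source--channel codes for which $\Pbb[S^n\neq\hat S^n]=\eps_n\to0$ and Eve's minimum average distortion is at least $\Pi-o(1)$. Since $\hat S^n=g(Y^n)$ and all alphabets are finite, Fano's inequality gives $\tfrac1n H(S^n|Y^n)\to0$, and I will absorb all such vanishing quantities into a single $\eps_n$. Introduce $Q\sim\text{Unif}[1:n]$ independent of everything else and define the single-letter auxiliaries
\[ U=(Z^n,S^{Q-1},Q),\qquad V=(Y^{Q-1},Z_{Q+1}^n,Q),\qquad W=(S^n,Y^{Q-1},Z_{Q+1}^n,Q), \]
together with $X=X_Q$, $Y=Y_Q$, $Z=Z_Q$. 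The proof then establishes the three inequalities separately, matches the ``cut'' variable $H(S|U)$ on the source side to a channel secrecy expression, and finally glues the two groups of variables together.

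\emph{Channel side (the right-hand sides of the first two inequalities).} For the first bound, reliability plus the Markov chain $S^n-X^n-Y^n$ and memorylessness give $nH(S)=H(S^n)\le I(S^n;Y^n)+n\eps_n\le\sum_i I(S^n,Y^{i-1},Z_{i+1}^n;Y_i)+n\eps_n$, and a time-sharing step turns the sum into $nI(W;Y)$, so $H(S)\le I(W;Y)+\eps_n$. For the secrecy bound, write $I(S^n;Y^n)-I(S^n;Z^n)=H(S^n|Z^n)-H(S^n|Y^n)$, expand the two mutual informations in opposite time orders, and apply the Csisz\'ar sum identity twice (once with no extra conditioning, once conditioned on $S^n$) to get $I(S^n;Y^n)-I(S^n;Z^n)=\sum_i\big[I(S^n;Y_i|V_i)-I(S^n;Z_i|V_i)\big]$ with $V_i=(Y^{i-1},Z_{i+1}^n)$. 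Since $V_i\subseteq W_i=(S^n,V_i)$, the right-hand side equals $n\big[I(W;Y|V)-I(W;Z|V)\big]$ after time sharing. Combining with $H(S^n|Y^n)\le n\eps_n$ and the chain-rule identity $\tfrac1n H(S^n|Z^n)=\tfrac1n\sum_i H(S_i|S^{i-1},Z^n)=H(S|U)$ yields $H(S|U)\le\big(I(W;Y|V)-I(W;Z|V)\big)+\eps_n$, which, because the left side is nonnegative, gives $H(S|U)\le[I(W;Y|V)-I(W;Z|V)]^+ +\eps_n$.

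\emph{Source side and gluing.} The quantity $H(S|U)$ just produced is precisely the left-hand side of the second inequality. For the payoff, note that any single-letter map $t(u)$ induces the admissible causal strategy $t_i(s^{i-1},z^n)=t(z^n,s^{i-1},i)$, whose distortion is $\Ebb[\pi(S_Q,t(U))]$; hence at blocklength $n$ Eve's minimum distortion is at most $\min_{t(u)}\Ebb[\pi(S_Q,t(U))]$, and since this minimum is at least $\Pi-o(1)$ for every $n$, $\Pi\le\min_{t(u)}\Ebb[\pi(S,t(U))]+o(1)$. Next I check that $(V,W,X,Y,Z)$ already factors as $P_VP_{W|V}P_{X|W}P_{Y,Z|X}$: $V$ is a function of $W$, and memorylessness gives $(Y_Q,Z_Q)-X_Q-(S^n,Y^{Q-1},Z_{Q+1}^n,Q)$, i.e.\ $W-X-(Y,Z)$ with the prescribed channel. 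Since each of the three inequalities refers only to $(S,U)$-marginal functionals or to $(V,W,Y,Z)$-marginal functionals, I may replace the joint law by the \emph{product} $P_{SU}\otimes P_{VWXYZ}$ of its two marginals without changing any of them; this product manifestly has the stated form $P_SP_{U|S}P_VP_{W|V}P_{X|W}P_{Y,Z|X}$. Finally I pass to the limit: first along a subsequence on which Eve's minimum distortion tends to its $\liminf$, then (via the support lemma) replacing $U,V,W$ by versions with alphabets bounded in terms of $|\Scal|,|\Xcal|,|\Ycal|,|\Zcal|$ while preserving $H(S)$, $H(S|U)$, $\min_t\Ebb[\pi(S,t(U))]$, $I(W;Y)$ and $I(W;Y|V)-I(W;Z|V)$, and then extracting a convergent subsequence of distributions on the resulting fixed finite alphabets and using continuity of all functionals (including $[\cdot]^+$ and the inner minimization) to get a single limiting distribution for which all three inequalities hold.

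\emph{Main obstacle.} The delicate point is the secrecy inequality, i.e.\ showing that Eve's residual source uncertainty $\tfrac1n H(S^n|Z^n)$ is dominated by a channel ``secrecy-capacity'' term. This is exactly where the two converse ideas must be combined: Bob's reliable recovery of $S^n$ from $Y^n$ is what converts $\tfrac1n H(S^n|Z^n)$ into $\tfrac1n\big(I(S^n;Y^n)-I(S^n;Z^n)\big)$ up to $\eps_n$, and only then does the Csisz\'ar--K\"orner manipulation of the pair $(Y^n,Z^n)$ single-letterize it; getting the conditioning variables $V_i=(Y^{i-1},Z_{i+1}^n)$ and $W_i=(S^n,V_i)$ and the two applications of the sum identity to line up correctly is the crux, with the cardinality-bounding and limiting step being routine but necessary bookkeeping.
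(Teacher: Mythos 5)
The paper omits its proof of Theorem~\ref{outerbnd}, saying only that it combines the converse ideas of \cite{Cuff2010} and \cite{Csiszar--Korner1978}; your argument is exactly that combination and it checks out: Fano turns $\tfrac1n H(S^n|Z^n)$ into $\tfrac1n\big(I(S^n;Y^n)-I(S^n;Z^n)\big)+\eps_n$, the two applications of the Csisz\'ar sum identity single-letterize it with $V_i=(Y^{i-1},Z_{i+1}^n)$ and $W_i=(S^n,V_i)$, and the choice $U=(Z^n,S^{Q-1},Q)$ makes $H(S|U)=\tfrac1n H(S^n|Z^n)$ while rendering $\min_{t(u)}\Ebb[\pi(S,t(U))]$ exactly Eve's optimal causal distortion. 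Your observation that each of the three bounds depends only on the $(S,U)$-marginal or the $(V,W,X,Y,Z)$-marginal, so the joint law may be replaced by the product of these two marginals to obtain the required factorization $P_SP_{U|S}P_VP_{W|V}P_{X|W}P_{YZ|X}$, correctly handles the independence implicit in that notation.
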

\smallskip
\section{Conclusion}
\smallskip
By considering source and channel operations separately, we have given results on how well communication systems can perform against a worst-case adversary when the secrecy resource is physical-layer security and the adversary has causal access to the source. We have seen that a guarantee of weak secrecy can be used in conjunction with our operationally-relevant measure of secrecy.  

\section{Acknowledgements}
This research was supported in part by the National Science Foundation under Grants CCF-1016671, CCF-1116013, and CCF-1017431, and also by the Air Force Office of Scientific Research under Grant FA9550-12-1-0196.

\bibliographystyle{unsrt}

\end{document}